\newtheorem{theorem}{Theorem}[section]
\newtheorem{lem}{Lemma}[section]
\newtheorem{rem}{Remark}[section]
\newtheorem{prop}{Proposition}[section]
\newtheorem{ass}{Assumption}[section]
\newtheorem{definition}{Definition}[section]
\providecommand{\algorithmname}{Algorithm}
\crefname{ass}{assumption}{assumptions}
\crefname{prop}{proposition}{propositions}
\crefname{lem}{lemma}{lemmas}
\crefname{algorithm}{algorithm}{algorithms}
\Crefname{algorithm}{Algorithm}{Algorithms}
\let\oldtheequation\theequation
\renewcommand\tagform@[1]{\maketag@@@{\ignorespaces#1\unskip\@@italiccorr}}
\renewcommand\theequation{(\oldtheequation)}
\newcounter{hypA}
\newcounter{hypB}
\newcounter{hypD}
\date{}
\DeclareRobustCommand\widecheck[1]{{\mathpalette\@widecheck{#1}}}
\def\@widecheck#1#2{%
    \setbox\z@\hbox{\m@th$#1#2$}%
    \setbox\tw@\hbox{\m@th$#1%
       \widehat{%
          \vrule\@width\z@\@height\ht\z@
          \vrule\@height\z@\@width\wd\z@}$}%
    \dp\tw@-\ht\z@
    \@tempdima\ht\z@ \advance\@tempdima2\ht\tw@ \divide\@tempdima\thr@@
    \setbox\tw@\hbox{%
       \raise\@tempdima\hbox{\scalebox{1}[-1]{\lower\@tempdima\box
\tw@}}}%
    {\ooalign{\box\tw@ \cr \box\z@}}}
\begin{document}

\begin{center}

{\Large  \textbf{Unbiased Parameter Inference for a Class of Partially Observed L\'{e}vy-Process Models}}

\vspace{0.4cm}

BY HAMZA RUZAYQAT \&  AJAY JASRA

{\footnotesize Computer, Electrical and Mathematical Sciences and Engineering Division, King Abdullah University of Science and Technology, Thuwal, 23955-6900, KSA.}
{\footnotesize E-Mail:\,} \texttt{\emph{\footnotesize hamza.ruzayqat@kaust.edu.sa, ajay.jasra@kaust.edu.sa}}
\end{center}

\begin{abstract}
We consider the problem of static Bayesian inference for partially observed L\'{e}vy-process models. We develop a methodology which allows one to infer static parameters and some states of the process, 
without a bias from the time-discretization of the afore-mentioned L\'{e}vy process. The unbiased method is exceptionally amenable to parallel implementation and can be computationally efficient relative to competing approaches. We implement the method on S \&P 500 log-return daily data and compare it to some Markov chain Monte Carlo (MCMC) algorithms. 
\\
\noindent \textbf{Keywords}: Unbiased inference, L\'{e}vy process, multilevel/sequential Monte Carlo methods.
\end{abstract}

\section{Introduction}
L\'{e}vy processes are one of the most important classes of stochastic processes. They have been widely used in applications including economics, finance, differential geometry, meteorology, turbulence, laser cooling and quantum field theory, to mention just a few. A comprehensive overview of applications of L\'{e}vy processes can be found in 
e.g.~\cite{BMR01, KSW05}. For instance in economics and finance, L\'{e}vy processes are used to model insurance risks, stock prices, exchange and interest rates. They provide more representational models that can capture extreme behaviors like the sudden and discontinuous moves in the price and market volatility, which other classical linear and nonlinear diffusion processes, e.g., Black-Scholes model and Cox-Ingersoll-Ross, usually fail to capture. 

In this article, we are  concerned with statistical (parameter) inference associated to partially observed L\'{e}vy processes that are solutions to stochastic differential equations (SDEs)
and observations are made at discrete and regular times. The particular model structure that we follow is that of  a hidden Markov model (HMM) e.g.~\cite{HMM} with unknown, static (fixed in time) and finite dimensional parameters. In particular, we are concerned with Bayesian static parameter inference associated to one fixed data set.
One of the main challenges with such Bayesian inference is that the transition density of the L\'{e}vy process is often intractable; that is, it does not admit an analytical form, nor does there exist a non-negative and unbiased estimator of it. In such scenarios, one often has to resort to time-discretization, leading to biased inference. In this paper we consider the Euler-discretization for L\'{e}vy driven SDEs, see \cite{R03, PT97}.

Many works in literature have considered Bayesian inference for diffusions and diffusions with jumps. For example, the articles \cite{GW08, RS01} consider fully observed (jump) diffusions, while the works \cite{BS01, GS07, GS06,JKM19, JSDT11} consider jump diffusions that are partially observed at discrete times. The main objective of this article is to perform Bayesian inference of partially observed L\'{e}vy driven SDEs with no time-discretisation error, i.e.~\textit{unbiased Bayesian inference}.  The main benefits for doing so include being able to provide an exact benchmark for inference and an embarrassingly parallel estimator that is based upon independent realizations. Our work is mainly inspired by the unbiased inference approach introduced in \cite{CFJLV21} for partially observed diffusions driven by Brownian motion. In particular, given some function $\varphi$ defined on the space of parameters and hidden states, the objective is to unbiasedly estimate the expectation of $\varphi$ with respect to (w.r.t.) the posterior distribution over the parameters and hidden states, given that it is well-defined, where the latter are diffusions driven by a L\'{e}vy process.

When considering inference that relies on time-discretizations, a high computational cost is often required to achieve smaller bias errors. For instance, as seen in \autoref{sec:numer}, running the particle marginal Metropolis-Hastings (PMMH) algorithm in \cite{ADH10} at a moderate discretization size to infer partially observed diffusions driven by L\'{e}vy process will be computationally expensive compared to the method we propose. The scheme we propose consists of two main tasks; the first is to perform PMMH at a lower discretization level, which can be done cheaply, whereas the second is a correction step. A debiasing scheme as in \cite{RG15} combined with the particle filter (PF) coupling of \cite{JKLZ18} is implemented in the correction step. The unbiasedness scheme is based on similar ideas to multilevel Monte Carlo (MLMC) (see \cite{Giles08}), but instead of optimal allocation of computational resources for minimizing the error, the main focus is to provide unbiased estimators. Ultimately, the approach presented here will result in an unbiased estimator of the expectation of $\varphi$ at a relatively small computational cost, in comparison to some competing methods.

This article is structured as follows. In \autoref{sec:not_model}, we provide the needed notations, the HMM model of interest and the methodology we follow. A brief introduction to L\'{e}vy process is also provided in this section. \autoref{sec:num_levy_and_sde} describes a method to numerically approximate a L\'{e}vy process, and also explains the Euler scheme of \cite{DH11} for SDEs driven by L\'{e}vy processes. We start \autoref{sec:coupled_levy} by reviewing the PF approach and MLMC in the context of SDEs driven by Brownian motions and L\'{e}vy processes. In the same section, we illustrate the process of generating coupled samples of the diffusion process which then used to compute an unbiased estimator of level differences that appear in the debiasing scheme. The overall methodology is detailed in \autoref{sec:ub_infer} where we provide the Bayesian model and main algorithm for unbiased Bayesian inference. Finally, in \autoref{sec:numer}, we implement the algorithm on real daily S \&P 500 log-return data set and compare it to PMMH method.

\section{Partially Observed L\'{e}vy-Process Model}\label{sec:not_model}

In this section we start by giving notations that will be used throughout the paper. Then, we describe the model and methodology adopted and give a brief introduction to L\'{e}vy process.
\subsection{Notations}\label{sec:notat}

Let $(\mathsf{X},\mathcal{X})$ be a measurable space. We denote the empty set by $\phi$. For $\varphi:\mathsf{X}\rightarrow\mathbb{R}$, we write $\mathcal{B}_b(\mathsf{X})$ as the collection of bounded measurable functions. 
Let $\varphi:\mathbb{R}^d\rightarrow\mathbb{R}$, $\textrm{Lip}_{\|\cdot\|_2}(\mathbb{R}^{d})$ denote the collection of real-valued functions that are Lipschitz w.r.t.~$\|\cdot\|_2$ (the $\mathbb{L}_2-$norm of a vector $x\in\mathbb{R}^d$.) That is, $\varphi\in\textrm{Lip}_{\|\cdot\|_2}(\mathbb{R}^{d})$ if there exists a $C<+\infty$ such that for any $(x,y)\in\mathbb{R}^{2d}$,
$
|\varphi(x)-\varphi(y)| \leq C\|x-y\|_2.
$
$\mathcal{P}(\mathsf{X})$ denotes the collection of probability measures on $(\mathsf{X},\mathcal{X})$.
For a measure $\mu$ on $(\mathsf{X},\mathcal{X})$
and a $\varphi\in\mathcal{B}_b(\mathsf{X})$, the notation $\mu(\varphi)=\int_{\mathsf{X}}\varphi(x)\mu(dx)$ is used. 
$\mathsf{B}(\mathbb{R}^d)$ denote the Borel sets on $\mathbb{R}^d$. $dx$ is used to denote the Lebesgue measure.
Let $M:\mathsf{X}\times\mathcal{X}\rightarrow[0,\infty)$ be a non-negative operator and $\mu$ a measure then we use the notations
$
\mu M(dy) = \int_{\mathsf{X}}\mu(dx) M(x,dy)
$
and for $\varphi\in\mathcal{B}_b(\mathsf{X})$, 
$
M(\varphi)(x) = \int_{\mathsf{X}} \varphi(y) M(x,dy).
$
For $A\in\mathcal{X}$ the indicator is written $\mathbb{I}_A(x)$.
$\mathcal{N}_s(\mu,\Sigma)$ denotes an $s-$dimensional Gaussian distribution of mean $\mu$ and covariance $\Sigma$.
For a vector $X$, we use the notation $X_{u:q}$ to denote the vector's elements $\{X_u, X_{u+1},\cdots, X_q\}$. For $X$ \& $Y$ vectors in $\mathbb{R}^d$, $\left\langle X,Y \right\rangle := \sum_{i=1}^d X_i ~Y_i$ is the inner product on $\mathbb{R}^d$. We denote by $I_d$ the identity matrix of size $d\times d$. 
For $\mu\in\mathcal{P}(\mathsf{X})$ and $X$ a random variable on $\mathsf{X}$ with distribution associated to $\mu$ we use the notation $X\sim\mu(\cdot)$. We denote by $a \wedge b$ the minimum of $\{a,b\}$.

\subsection{Model}
\label{subsec:model}
Let $T_f\in\mathbb{N}$ be given and $(\mathsf{X},\mathcal{X},\{\mathcal{X}_t\}_{t\in[0,T_f]},\mathbb{P})$ be a probability space, where $\{\mathcal{X}_t\}_{t\in[0,T_f]}$ is a right-continuous filtration on $\mathcal{X}$ and a stochastic process $\{Y_t\}_{t\in[0,T_f]}$, defined through the following SDE
\begin{align}
\label{eq:levy_driv_sde}
dY_t = f_\theta(Y_{t^-}) ~dX_t,
\end{align}
where $Y_0=y_0\in \mathbb{R}^d$ is known, $\theta\in \Theta \subseteq \mathbb{R}^{d_{\theta}}$ is a static model parameter, $f_\theta:\mathbb{R}^d \to \mathbb{R}^{d\times r}$ is a function, and $\{X_t\}_{t\in[0,T_f]}$ is an $r$-dimensional L\'{e}vy process. For background on L\'{e}vy processes and analysis of SDEs driven by the processes, we refer readers to the books \cite{Applebaum04, BMR01, Bertoin96, KSW05, Protter04, Sato99}. We assume that there are partial noisy observations, $\{Z_i=z_i\}_{i=1}^{T_f}$, $z_i\in\mathbb{R}^m$, of the diffusion process at discrete times $\{1,\dots,T_f\}$; to connect with the standard HMM literature we will set $n=T_f$ and use the notation exchangeably. It is assumed that conditional on $Y_i$, $Z_i$ is independent of the random variables $\{Y_j\}_{j\neq j}$ and that
\begin{align}
\label{eq:obs}
Z_i | Y_i = y_i \sim g_\theta(y_i, \cdot) =: G_i^{(\theta)}(y_i),
\end{align}
where $G_i^{(\theta)}$ is a non-negative, bounded and measurable function that can depend on the unknown parameters $\theta$; we suppress the dependence on the observations. The model given by \ref{eq:levy_driv_sde}-\ref{eq:obs} is an HMM and is often referred to in the literature as the state-space model. To complete the specification, one requires a prior on $\theta$, $\overline{\Pi}(\cdot)$, but for now we shall leave the precise description for later on in the article. We note, however, that we will abuse the notation and use $\overline{\Pi}$ to denote measure and density simultaneously; this applies to all distributions/densities included in this paper as well. Therefore, the model of interest can be written
as
$$
\Pi(d\theta,dy_{1:n}|z_{1:n}) \propto \left(\prod_{i=1}^{n} G_i^{(\theta)}(y_i) M_i^{(\theta,\infty)}(y_{i-1},dy_i)\right)\overline{\Pi}(d\theta)
$$
where $M_i^{(\theta,\infty)}$ is the transition density induced by \eqref{eq:levy_driv_sde} that is assumed to exist; the significance of the superscript $\infty$ will become clear below.
In practice, one can seldom work with the afore-mentioned probability and so one often resorts to a time-discretization of the L\'evy process, in-between observation times.
That is, let $l\in\mathbb{N}\cup\{0\}$ be a parameter that controls the number of discretization points (e.g.~$2^l$), then time discretization induces a collection of auxiliary random
variables $U_i^{l}$ ($i\in\{1,\dots,n\}$) on a space $\mathsf{U}^l$ say, whose dimension grows (in-definitely with $l$), so that one has
\begin{align}
\label{eq:def_M_kernel}
M_i^{(\theta,l)}(y_{i-1},dy_i) := \int_{\mathsf{U}^l} \overline{M}_i^{(\theta,l)}(y_{i-1},dy_i,du_i^l)
\end{align}
for some positive transition kernel $\overline{M}_i^{(\theta,l)}(y_{i-1},dy_i,du_i^l)$. 
Note that $M_i^{(\theta,l)}(y_{i-1},dy_i)$ is constructed so that 
for each $i\in\{1,\dots,n\}$ and $(y_{i-1},A)\in \mathbb{R}^d\times\mathsf{B}(\mathbb{R}^d)$
$$
\lim_{l\rightarrow\infty} \int_A M_i^{(\theta,l)}(y_{i-1},dy_i) = \int_A M_i^{(\theta,\infty)}(y_{i-1},dy_i).
$$
In general, one has to resort to working with
$$
\Pi^{(l)}(d\theta,dy_{1:n},du_{1:n}^l|z_{1:n}) \propto \left(\prod_{i=1}^{n} G_i^{(\theta)}(y_i) \overline{M}_i^{(\theta,l)}(y_{i-1},dy_i,du_i^l)\right)\overline{\Pi}(d\theta).
$$
The exact nature of the random variables $U_i^{l}$ will be discussed later on in the article. We refer to $l$ as the level of discretization.

In this paper, we develop an unbiased static parameter inference for the HMM \ref{eq:levy_driv_sde}-\ref{eq:obs}. The main algorithm consists of two steps:
\begin{enumerate}[(Step1.)]
\item[Step 1.] A PMMH algorithm \cite{ADH10} targeting a coarse-level (low $l$) model that can be implemented inexpensively.
\item[Step 2.] An importance sampling (IS) type correction (see \cite{VHF20}) that comprises
\begin{enumerate}[(a)]
\item a PF coupling as in \cite{JKLZ18},
\item a single-term randomized multilevel Monte Carlo (MLMC) type estimator (see e.g \cite{M11, RG15}), and
\item a multilevel (ML) estimation of the L\'{e}vy driven SDE (see \cite{DH11}).
\end{enumerate} 
\end{enumerate}
\noindent The PMMH algorithm in Step 1 is basically a Metropolis-Hastings (MH) MCMC algorithm accompined with an Euler type discretization of the given SDE. PMMH targets the full joint posterior distribution, $\Pi(d\theta,dy_{1:n}|z_{1:n})$, for certain state-space models by jointly updating the parameter $\theta$ and the states $y_{1:n}$. 

The  posterior $\Pi(d\theta,dy_{1:n}|z_{1:n})$ can be decomposed as 
$$
\Pi(d\theta,dy_{1:n}|z_{1:n}) \propto \overline{\Pi}(d\theta)~\eta_n^{(\theta)}(dy_{1:n}|z_{1:n})~ \mathcal{L}_n^{(\theta)}(z_{1:n}),
$$
where $\eta_n^{(\theta)}(dy_{1:n}|z_{1:n})$ is the joint probability measure of the observations and the diffusion process defined through \ref{eq:levy_driv_sde} at unit times $1,\cdots, n\in \mathbb{N}$, given the parameter $\theta$, and $\mathcal{L}_n^{(\theta)}(z_{1:n})$ is the likelihood of the observations, $Z_{1:n}$, given the parameter $\theta$. The proposal density
$$
q_n(\widetilde{\theta}, \widetilde{y}_{1:n} | \theta, y_{1:n}) := q(\widetilde{\theta}|\theta)~  \eta_n^{(\widetilde{\theta)}}(\widetilde{y}_{1:n}|z_{1:n})
$$
is used in the MH update step to propose $\widetilde{\theta} \sim q(\cdot|\theta)$ and $\widetilde{y}_{1:n} \sim \eta_n^{(\widetilde{\theta})}(\cdot|z_{1:n})$, which then will lead to acceptance probability given by
$$
1 \wedge \frac{ \Pi(\widetilde{\theta},\widetilde{y}_{1:n}|z_{1:n}) ~q_n(\theta, y_{1:n} | \widetilde{\theta}, \widetilde{y}_{1:n})}{\Pi(\theta,y_{1:n}|z_{1:n})~ q_n(\widetilde{\theta}, \widetilde{y}_{1:n} | \theta, y_{1:n}) } = 1 \wedge \frac{ \overline{\Pi}(\widetilde{\theta}) ~\mathcal{L}_n^{(\widetilde{\theta})}(z_{1:n})~q(\theta|\widetilde{\theta})}{\overline{\Pi}(\theta) ~ \mathcal{L}_n^{(\theta)}(z_{1:n})~q(\widetilde{\theta}|\theta)}.
$$
In this article, we use PMMH algorithm to generate samples from the approximate posterior 
$$
\Pi^{(l)}(d\theta,dy_{1:n}|z_{1:n}) = \int_{\mathsf{U}^l} \Pi^{(l)}(d\theta,dy_{1:n},du_{1:n}^l|z_{1:n}),
$$
which requires sampling from an approximation of $\eta_n^{(\theta)}(dy_{1:n}|z_{1:n})$ (namely $\eta_n^{(\theta, l)}$ defined in \autoref{subsec:pf}) that is done through the PF algorithm. The samples resulting from the PF are also used to estimate the likelihood term that appears in the formula of MH acceptance probability.

The bias associated to running PMMH at a low discretization level is dealt with through the correction process in Step 2. The idea is to construct an approximate coupling of the posterior $\Pi$ over the parameter and hidden states at two consecutive discretization levels (with a fixed parameter), namely $l$ and $l-1$, for some $l \in \mathbb{N}$, then correcting by an IS method. This specific coupling will result in an unbiased estimation of the difference of two unnormalised integrals over the hidden states (see \autoref{subsec:cpf_and_ub_level} for more details), a difference that appears in a ML collapsing sum identity. As mentioned in the introduction, the main focus of this article is to unbiasedly estimate the expectation of some given function $\varphi$ that is defined over $\Theta \times \mathbb{R}^{nd}$ w.r.t. the joint posterior distribution $\Pi$ at some time $n\in\mathbb{N}$. The unbiasedness arises from running the coupled PF along with the IS correction at random discretization levels, $l\in \mathbb{N}$, sampled from a given probability mass function (PMF) $P_l$.

This approach was presented in \cite{CFJLV21} for partially observed diffusions driven by a Wiener process. In this article, we apply this method to partially observed models driven by a L\'{e}vy process. In order to properly describe our methodology, we will have to detail several aspects, and this is the topic of the following sections. We start by giving a brief background of L\'{e}vy processes and their properties.
\subsection{L\'{e}vy Processes}
\label{subsec:levy_proc}
L\'{e}vy processes are stochastic processes that are almost surely zero at time zero, continuous in probability with independent and stationary increments. It can be thought of as random walks in continuous time. Examples of L\'{e}vy processes include Wiener, Poisson and Gamma processes. The distribution of a L\'{e}vy process $\{X_t\}_{t\in [0,1]}$ is determined by its characteristic function given by the celebrated L\'{e}vy-Khintchine formula
\begin{align*}
\mathbb{E}[\exp\left\{ i\left\langle w,X_t \right\rangle\right\} ]= \psi_1(t,w) + \psi_2(t,w) + \psi_3(t,w),
\end{align*}
for all $w\in \mathbb{R}^r$, such that
\begin{align*}
\psi_1(t,w) &= \exp\{it\left\langle b, w \right\rangle\}, \quad \psi_2(t,w) = \exp\left\{- \frac{t}{2} \left\langle w, \Sigma w \right\rangle \right\}, \text{ and}\\
\psi_3(t,w)&= \exp\left\{ t \int_{\mathbb{R}^r \setminus{\{0\}}} \left[ \exp\left\{ i\left\langle w,x \right\rangle\right\} -1 - i \left\langle w,x \right\rangle \mathbb{I}_{B_1} (x)\right] \nu(dx)\right\},
\end{align*}
where $b\in \mathbb{R}^r$, $0\leq \Sigma \in \mathbb{R}^{r \times r}$ is symmetric, $B_{\delta}=\{x\in \mathbb{R}^r ~ : ~ \|x\|_2 < \delta\}$, and $\nu$ is called the L\'{e}vy measure of $X_t$ satisfying $\nu(\{0\})=0$ and $\int_{\mathbb{R}^r}(1 \wedge \|x\|_2^2) \nu(dx) < \infty$. Note that this condition on $\nu$ implies that $\nu(B_\delta)<\infty$ for all $\delta>0$, see \cite{Kyp14}. Since probability distributions are uniquely determined by characteristic functions, the L\'{e}vy process $X_t$ is uniquely determined by the so-called L\'{e}vy triplet $(\nu,\Sigma,b)$. The distribution of a L\'{e}vy process is infinitely divisible, that is, the law of a L\'{e}vy process at time $t$ is the sum of the $n$ laws of the L\'{e}vy process increments over time intervals of length $t/n$, where $n$ is some given positive integer. Conversely, if $\mu$ is an infinitely divisible distribution then there exists a L\'{e}vy process $\{X_t\}_{t\in [0,1]}$ such that $X_1\sim \mu$. Moreover,  L\'{e}vy-Khintchine and L\'{e}vy-It\^{o} decomposition theorems propose that every L\'{e}vy process is the sum of three L\'{e}vy processes, which is again a L\'{e}vy process (see \cite{Sato99} for more details.) The terms are a linear Brownian motion (which corresponds to $\psi_1$ and $\psi_2$) and a L\'{e}vy jump process (which corresponds to $\psi_3$.) For a fixed $\delta>0$, note that $\psi_3$ can be written as
\begin{align*}
\psi_3(t,w) &= \exp\Bigg\{t\nu(B_\delta^c) \int_{\mathbb{R}^r} \left[ \exp\left\{ i\left\langle w,x \right\rangle\right\} - 1 \right] \mu(dx) \\
&\qquad\qquad+ t\int_{\mathbb{R}^r}\left[ \exp\left\{ i\left\langle w,x \right\rangle\right\} -1 - i \left\langle w,x \right\rangle \right] \zeta(dx) \Bigg\},
\end{align*}
where
$
\mu(dx)=\mathbb{I}_{B_\delta^c}(x) \nu(dx)/\nu(B_\delta^c)
$
and $\zeta(dx)=\mathbb{I}_{B_\delta\setminus\{0\}}(x) \nu(dx)$. By L\'{e}vy-It\^{o} decomposition theorem, the first integral in the above identity corresponds to a compound Poisson process with intensity $\nu(B_\delta^c)$ and distribution $\mu$, while the second integral corresponds to a compensated Poisson process that is a square-integrable martingale \cite{Kyp14}. Then, any L\'{e}vy process $\{X_t\}_{t\in [0,1]}$ can be written as 
\begin{align}
\label{eq:levy_proc}
X_t = tb + \Sigma^{1/2} W_t + L_t,
\end{align}
where the process $L_t$ is independent of the linear Brownian motion. An important parameter that helps with understanding the behavior of a L\'{e}vy process is Blumenthal-Getoor (BG) index \cite{BG61} defined as
$$
\Gamma := \inf \left\{p > 0~:~ \int_{B_1} |x|^p ~\nu(dx) <\infty\right\} \in [0,2].
$$
The BG index measures the frequency of small jumps. Large index corresponds to a L\'{e}vy process that has small jumps at high frequencies. 

We assume that the L\'{e}vy process $\{X_t\}_{t\in[0,T_f]}$ defined in \ref{eq:levy_proc} and the L\'{e}vy-driven process $\{Y_t\}_{t\in[0,T_f]}$ given in \ref{eq:levy_driv_sde} satisfy the following assumptions:

\begin{ass}
\label{ass:f_and_v}
There exists $C_1,C_2,C_3 > 0$ such that
\begin{itemize}
\item[(i)] $\|f_\theta(x)-f_\theta(y)\|_2 \leq C_1 \|x-y\|_2$ and $\|f_\theta(x)\|_2 \leq C_2$ for all $x\in \mathbb{R}^d$,
\item[(ii)] $0< \int_{\mathbb{R}^d} |x|^2~ \nu(dx) \leq C_3$.
\end{itemize}
\end{ass}
\noindent Note that \autoref{ass:f_and_v} (i) is different from \cite[Assumption A]{DH11} where the authors only assume that $\|f_\theta(y_0)\|_2 \leq C$, with $y_0$ as in \eqref{eq:levy_driv_sde}.

It is generally impossible to exactly simulate the increments of a L\'{e}vy process. Therefore, in addition to an Euler scheme that is used to approximate the process $\{Y_t\}_{t\in[0,T_f]}$, one first has to approximately simulate the L\'{e}vy increments. In this paper, we assume that one cannot exactly sample from the law of $X_t$, hence of $Y_t$, but rather numerical approximation of such processes is adopted. In the following section, we show how one can approximately simulate the increments of a L\'{e}vy process which are then used within an Euler scheme to give approximate samples of the process $\{Y_t\}_{t\geq 0}$.

\section{Numerical Approximations}
\label{sec:num_levy_and_sde}
Here we describe an Euler scheme in which one can produce approximate solutions to the L\'{e}vy-driven SDE in \eqref{eq:levy_driv_sde}. We adopt the approach in \cite{PT97} (see also \cite{JKMP05, R03, DH11}) which is summarized in \autoref{subsec:num_levy_incr} and \autoref{subsec:Euler_sde} below.

\subsection{A Numerical Approximation to L\'{e}vy Process}
\label{subsec:num_levy_incr}
We first consider an approximation scheme to simulate the L\'{e}vy process over the time interval $[0,1]$. Let $\Delta_l = 2^{-l}$, $l\in \mathbb{N}\cup \{0\}$, be given and denote by $\delta_l>0$ the jump threshold parameter so that jumps with height smaller than $\delta_l$ are ignored. We will refer to $l$ as the level of discretization. Define $\lambda_l := \nu(B_{\delta_l}^c)$, then given $l\in \mathbb{N}\cup \{0\}$, we choose $\delta_l$ such that $\lambda_l = \Delta_l^{-1}$, which assures that $\lambda_l > 0$ for sufficiently small $\delta_l>0$. Let $\{N(t,A)\}_{t\geq 0, A\in \mathsf{B}(\mathbb{R}^r\setminus\{0\})}$ denote a Poisson process with rate $\nu(A)$. Then, the number of jumps with heights larger than $\delta_l$ before time $t$ is a Poisson process $N(t, B_{\delta_l}^c)$ with rate $\lambda_l$. We denote by $\Delta L_t := L_t - \lim_{s \uparrow t} L_s$ the jump height at discontinuity time $t$ and $\Delta L_{\widehat{T}_i^l} :=L_{\widehat{T}_i^l} - L_{\widehat{T}_{i-1}^l}$ the jump height at discontinuity time $\widehat{T}_i^l$, where $l$ is the discretization level and $i \in \{1,\cdots , \widehat{K} \}$ for some $\widehat{K}\in \mathbb{N}$ such that $\widehat{T}_0^l=0$ and $\widehat{T}_{\widehat{K}}^l = 1$. Set 
$$
F_0^l := \int_{B_{\delta_l}^c} x~ \nu(dx)
$$
and
$$
\mu^l(dx) := \frac{1}{\lambda_l}\mathbb{I}_{B_{\delta_l}^c}(x) \nu(dx),
$$
then $\mu^l(dx)$ defines a probability measure on $\mathbb{R}^r\setminus\{0\}$ such that $\int_{B_{\delta_l}^c} x~ N(t,dx)$ is a random finite sum of random variables that are sampled from $\mu^l(dx)$. These random variables are the heights of all jumps larger than $\delta_l$ before time $t$. The compensated compound Poisson process defined by 
$$
L^{\delta_l}_t :=\int_{B_{\delta_l}^c} x~ N(t,dx) - t F_0^l =  \sum_{i = 1}^{N(t,B_{\delta_l}^c)} \Delta L_{\widehat{T}_i^l} - t F_0^l
$$
is an $L_2$ martingale which converges in $L_2$ as $\delta_l\to 0$ to the L\'{e}vy process $L_t$ in \ref{eq:levy_proc} \cite{Applebaum04, DH11}. The term $tF_0^l$ is the expected total length of all jumps before time $t$ given by $\mathbb{E}\Big[ \int_{B_{\delta_l}^c} x~ N(t,dx) \Big]$. The time increments $\{\widehat{T}_i^l - \widehat{T}_{i-1}^l\}_{i=1}^{\widehat{K}}$ of the jumps with heights larger than $\delta_l$ are sampled from an exponential distribution with parameter $\lambda_l$. However, since we want the Brownian increments $W_{\widehat{T}_i^l}-W_{\widehat{T}_{i-1}^l}$ to be sampled from a normal distribution with covariance matrix $\alpha_i\,I_r$ such that $\alpha_i$ is at most $\Delta_l$, a refinement of the jump times is required. Set $T_0^l=0$, then the new jump times are
$$
T_i^l = \min \Big\{T_{i-1}^l+\Delta_l,~ \min\{\widehat{T}_j^l>T_{i-1}^l~:~ j\in\{1,\cdots,\widehat{K}\} \Big\},
$$
for $i \in \{1,\cdots,K\}$, where $K\in \mathbb{N}$ and $T_{K}^l=1$. To summarize, increments of the process $\{X_t\}_{t\in[0,1]}$ are approximated at discretization times $\{T_i^l\}_{i=1}^{K}$  by 
\begin{align}
\label{eq:disc_levy}
\Delta X_{T_i^l}^l = (b-F_0^l) (T_i^l - T_{i-1}^l) + \Sigma^{1/2}(W_{T_i^l}-W_{T_{i-1}^l}) + \Delta L_{T_i^l}.
\end{align}
If $T_i^l = \widehat{T}_j^l$ for some $j\in\{1,\cdots,\widehat{K}\}$, then $\Delta L_{T_i^l} =  \Delta L_{\widehat{T}_j^l} \sim \mu^l$, otherwise $\Delta L_{T_i^l} = 0$. One can use the inversion sampling method to sample from $\mu^l$ if the cumulative distribution function can be computed analytically or else other methods can be used (e.g. rejection sampling.) Below, we state a needed assumption for the choice of $\delta_l$.
\begin{ass}
\label{ass:delta_l}
There exists a $\kappa>0$ such that for all $l\in\mathbb{N}$, there exists a $\delta_l>0$ such that $\delta_l=\mathcal{O}(\Delta_l^\kappa)$ and $\lambda_l = \Delta_l^{-1}$.
\end{ass}
In \autoref{alg:disc_levy} we summarize the process of obtaining a single-level approximation of increments of the L\'{e}vy process on the interval $[0,1]$.
%
%
\begin{center}
\captionsetup[algorithm]{style=algori}
\captionof{algorithm}{Single-level increments of the L\'{e}vy process on $[0,1]$}
\label{alg:disc_levy}
\raggedright
\begin{enumerate}
\item \textbf{Input:} $l$, $\lambda_l$ and $\mu^l(dx)$.
\item Initialization: Set $\widehat{T}_0^l = 0$, $T_0^l=0$ and $i=1$.
\item Jump times:
\begin{itemize}
\item Set $\widehat{T}_i^l = \min \{1, \widehat{T}_{i-1}^l + \zeta_i \}$, where $\zeta_i \sim \text{Exp}(\lambda_l)$.
\item If $\widehat{T}_i^l = 1$, set $\widehat{K}=i$ and go to Step 4..
\item Otherwise, $i = i + 1$ and go to start of Step 3..
\end{itemize}
\item Jump heights: For $i = 1, \cdots, \widehat{K}$, sample $\Delta L_{\widehat{T}_i^l} \sim \mu^l$. Set $i = 1$.
\item Refinement of time increments: 
\begin{itemize}
\item $T_i^l = \min \Bigg\{T_{i-1}^l+\Delta_l,~ \min \Big\{\widehat{T}_j^l>T_{i-1}^l~:~ j\in\{1,\cdots,\widehat{K}\} \Big\} \Bigg\}$.
\item If $T_i^l = \widehat{T}_j^l$ for some $j\in \{1,\cdots,\widehat{K}\}$, set $\Delta L_{T_i^l} =  \Delta L_{\widehat{T}_j^l} $; otherwise $\Delta L_{T_i^l} =  0$.
\item If $T_i^l = 1$, set $K = i$ and go to output Step 6.
\item Otherwise, $i = i + 1$ and go to start of Step 5..
\end{itemize} 
\item \textbf{Output}: Return $K$, $\{T_i^l\}_{i=0}^K$ and $\{\Delta L_{T_i^l}\}_{i=1}^K$.
\end{enumerate}
\vspace{-0.2cm}
\hrulefill
\end{center}
\subsection{An Euler Scheme to L\'{e}vy-driven SDEs}
\label{subsec:Euler_sde}
Assume that $\theta\in \Theta$ is fixed. The Euler approximation of the L\'{e}vy-driven process $\{Y_t\}_{t\in[0,1]}$ at discretization times $\{T_i^l\}_{i=1}^{K}$ is given by
\begin{align}
\label{eq:disc_levy_driv}
Y_{T_i^l}^l = Y_{T_{i-1}^l}^l + f_\theta(Y_{T_{i-1}^l}^l) \Delta X_{T_i^l}^l,
\end{align}
with $Y_{T_0^l}^l = Y_0$ and the L\'{e}vy increments $\{\Delta X_{T_i^l}^l\}_{i=1}^{K}$ are computed through \eqref{eq:disc_levy}. \autoref{alg:disc_levy_driv} describes the process of generating samples $Y_1^l$, an Euler approximation of the process $\{Y_t\}_{t\geq 0}$ at time 1 and a discretization level $l$. In this paper, we shall consider the discrete sequence $\{Y_0, Y_1, \cdots, Y_{T_f}\}$, where for each $n=1,\cdots,T_f$ we approximate $Y_n$ by $Y_n^l$, where $Y_n^l=Y_{T_K^l}^l$ and in the input of \autoref{alg:disc_levy_driv} $Y_{0}^l$ is set to $Y_{n-1}^l$.

Theorem 2 of \cite{DH11} provides asymptotic bounds for the strong error of the Euler scheme presented in \autoref{alg:disc_levy_driv}. The theorem states that under \autoref{ass:f_and_v}, for any $l \in \mathbb{N}$ and $\delta_l>0$ such that $\nu(B_{\delta_l}^c)=\Delta_l^{-1}$, there exists a $C<\infty$ such that
\begin{align}
&\mathbb{E} \left[\sup_{t\in [0,1]} \left|Y_t - Y_t^l \right|^2 \right] \leq C (\sigma_{\delta_l}^2 + \Delta_l |\log(\Delta_l)|), \qquad \Sigma \neq 0, \nonumber\\
&\mathbb{E} \left[ \sup_{t\in [0,1]} \left|Y_t - Y_t^l \right|^2 \right] \leq C (\sigma_{\delta_l}^2 + |b- F_0^l|^2 \Delta_l^2), \qquad \Sigma = 0, \label{eq:strong_err_Euler}
\end{align}
where $\sigma_{\delta_l}^2 := \int_{B_{\delta_l}} |x|^2~\nu(dx)$. As we can observe from \eqref{eq:strong_err_Euler} the strong error rate of the Euler scheme will only depend upon the choice of $\nu$ and whether the L\'{e}vy process has a Brownian motion component. We will show in \autoref{sec:numer} that for a certain choice of $\nu$, the strong error rate $\beta$ is 3.
%
%
\begin{center}
\captionsetup[algorithm]{style=algori}
\captionof{algorithm}{Single-level sampling for L\'{e}vy-driven process}
\label{alg:disc_levy_driv}
\raggedright
\begin{enumerate}
\item \textbf{Input:} $Y_{0}$, $l$, $\lambda_l$, $F_0^l$, $\mu^l(dx)$ and $\theta$.
\item Call \autoref{alg:disc_levy} with input $l$, $\lambda_l$ and $\mu^l(dx)$ to return $\{T_i^l\}_{i=0}^K$ and $\{\Delta L_{T_i^l}\}_{i=1}^K$.
\item For $i = 1, \cdots,K$, sample the Brownian increments $W_{T_i^l} - W_{T_{i-1}^l}$ from $\mathcal{N}_r(0,(T_i^l - T_{i-1}^l)I_r)$. 
\item With $\{T_i^l\}_{i=0}^K$, $\{\Delta L_{T_i^l}\}_{i=1}^K$, $\{W_{T_i^l} - W_{T_{i-1}^l}\}_{i=1}^K$ and $F_0^l$, generate the increments $\{\Delta X_{T_i^l}\}_{i=1}^K$ through \ref{eq:disc_levy}. Then, compute  $Y_1^l$ via the recursion in \ref{eq:disc_levy_driv}.
\item \textbf{Output:} Return $Y_1^l$.

\end{enumerate}
\vspace{-0.2cm}
\hrulefill
\end{center}
%
\begin{rem}
\Cref{alg:disc_levy,alg:disc_levy_driv} are in fact a single-level version of a more general coupled discretization scheme given in \cite{DH11} which will be described shortly in \autoref{subsec:MLMC}.
\end{rem}
The Markov transition between $Y_{n-1}$ and $Y_n$, $n=1,\cdots,T_f$, is given by the transition kernel $M^{(\theta,\infty)}(y_{n-1},dy_n)$ of the process in \ref{eq:levy_driv_sde} over unit time. For $n=1,\cdots,T_f$, set $M_n^{(\theta,\infty)}(y_{n-1},dy_n)= M^{(\theta,\infty)}(y_{n-1},dy_n)$, then the pair $(M^{(\theta,\infty)}_n,G_n^{(\theta)} )$ defines a Feynman-Kac type model (see \cite{P04}). In many cases, $M^{(\theta,\infty)}$ exists, but one cannot exactly sample from it or evaluate a non-negative unbiased estimator of it, and therefore, one has to work with a discretization of the model as in \ref{eq:disc_levy_driv}.
\section{Particle Filter and Multilevel Monte Carlo Methods}
\label{sec:coupled_levy}

We mentioned earlier in \autoref{subsec:model} that Step 2. of the main algorithm requires a coupling of the PF. Before discussing that, we give a brief review of the PF algorithm in \autoref{subsec:pf}. In \autoref{subsec:MLMC}, we summarize the MLMC method for diffusions driven by Wiener processes and that of \cite{DH11} for diffusions driven by L\'{e}vy process. Then in \autoref{subsec:cpf_and_ub_level}, we explain the PF coupling of \cite{JKLZ18}, which will be used in our main algorithm to produce an unbiased scheme for estimating the level-difference in \eqref{eq:level_diff}. In all coming subsections we assume that $l\in \mathbb{N}\cup\{0\}$ and $\theta\in\Theta$ are both fixed.

\subsection{Particle Filter}
\label{subsec:pf}
The PF (see for example \cite{P04, DDG01}) is a method that generates a collection of $N$ samples in parallel associated with a set of weights and combines IS and resampling techniques to numerically approximate expectations of functionals w.r.t the normalized or unnormalized measures. Let $\varphi : \mathbb{R}^d \to \mathbb{R}$ be a function in $\mathcal{B}_b(\mathbb{R}^d)$. Suppose that one is interested in approximating the expected value $\eta_n^{(\theta)}(\varphi) =:\mathbb{E}[\varphi(Y_{1:n})|Z_{1:n}=z_{1:n}]$, where $Y_t$ is the solution to the L\'{e}vy-driven SDE in \ref{eq:levy_driv_sde} at discrete time $t\in \mathbb{N}$, and $\eta_n^{(\theta)}(dy_{1:n})$ is the joint probability measure of the observations and the diffusion process defined through \ref{eq:levy_driv_sde} at discrete times, given by
\begin{align}
\label{eq:eta_n_and_gamma_n}
\eta_n^{(\theta)}(dy_{1:n}) \propto \prod_{i = 1}^n G_i^{(\theta)}(y_i) M_i^{(\theta,\infty)}(y_{i-1},dy_i):=\gamma_n^{(\theta)}(dy_{1:n}),
\end{align}
where the dependence on the observations is suppressed. We refer to $\eta_n^{(\theta)}$ and $\gamma_n^{(\theta)}$ as \textit{normalized} and \textit{unnormalized} smoothing measures, respectively. In practice, $M_n^{(\theta,\infty)}(y_{n-1},dy_n)$ is typically intractable as discussed earlier on, and as a result, one considers targets associated to a time discretization given by
\begin{align}
\label{eq:eta_n^l_and_gamma_n^l}
\eta_n^{(\theta,l)}(dy_{1:n}^l) \propto \prod_{i = 1}^n G_i^{(\theta)}(y_i^l) M_i^{(\theta,l)}(y_{i-1}^l,dy_i^l) := \gamma_n^{(\theta,l)}(dy_{1:n}^l),
\end{align}
where $M_n^{(\theta,l)}(y_{n-1}^l,dy_n^l)$ is the induced transition kernel from the Euler approximation in \ref{eq:disc_levy_driv} over unit time, and $\gamma_n^{(\theta,l)}(dy_{1:n}^l)$ refers to the approximate unnormalized measure where $\gamma_n^{(\theta)}= \lim_{l\to \infty
}\gamma_n^{(\theta,l)}$ uniformly in $\theta$. Note that here we include a superscript $l$ on $y_{1:n}$ to emphasize that they are samples at unit times from the discretized SDE in \eqref{eq:disc_levy_driv}. Let 
\begin{align}
\label{eq:eta_n^l(phi)}
\eta_n^{(\theta,l)}(\varphi):= \int_{\mathbb{R}^{nd}} \varphi(y_n^l) ~ \eta_n^{(\theta,l)}(dy_{1:n}^l),
\end{align}
denote the expectation of a bounded and measurable function $\varphi$ w.r.t. the probability measure $\eta_n^{(\theta,l)}(dy_{1:n}^l)$ at discrete time $n$. We also define
\begin{align}
\label{eq:gamma_n^l(phi)}
\gamma_n^{(\theta,l)}(\varphi) :=  \int_{\mathbb{R}^{nd}} \varphi(\tilde{y}_n) ~\gamma_n^{(\theta,l)}(dy^l_{1:n}), 
\end{align}
the expectation of $\varphi$ w.r.t. the unnormalized measure. Note that $\eta_n^{(\theta,l)}(\varphi) = \gamma_n^{(\theta,l)}(\varphi) / \gamma_n^{(\theta,l)}(1)$. Both  $\eta_n^{(\theta,l)}(\varphi)$ and $\gamma_n^{(\theta,l)}(\varphi)$ can be estimated using the PF method described in \autoref{alg:pf}.
%
%
%
%
\begin{center}
\captionsetup[algorithm]{style=algori}
\captionof{algorithm}{Particle filter}
\label{alg:pf}
\raggedright
\begin{enumerate}
\item \textbf{Input:} Level $l$, number of particles $N$, $Y_0$, terminal time $n$, $\theta$ and potential functions $\{G_i^{(\theta)}\}_{i=1}^n$.
\item Initialization: For $i = 1,\cdots,N$ set $\bm{Y_0^{l,i}}=Y_0^{l,i}=Y_0$
\item For $k=1,\cdots,n$, do:
\begin{enumerate}[(i)]
\item  For $i = 1,\cdots,N$ simulate $Y_1^{l,i}$ via \ref{eq:disc_levy_driv} with $Y_0^{l,i}=Y_{k-1}^{l,i}$ and set $\bm{\widehat{Y}_k^{l,i}} = (\bm{Y_{k-1}^{l,i}}, Y_1^{l,i})$.
\item For $i = 1,\cdots,N$ compute $w_k^i = G_k^{(\theta)}(y_1^{l,i})$ and set $W_k^i = w_k^i/\sum_{j=1}^N w_k^j$.
\item Resample the particles $\{\bm{\widehat{Y}_k^{l,i}}\}_{i=1}^N$ according to the normalized weights $\{W_k^i\}_{i=1}^N$. Denote the resampled particles $\{\bm{Y_k^{l,i}}\}_{i=1}^N=\{(Y_0^{l,i},\cdots,Y_k^{l,i})\}_{i=1}^N$. 
\end{enumerate}
\item For $i=1,\cdots,N$ set $V^i = \left[ \prod_{k=1}^{n-1} \frac{1}{N} \sum_{j=1}^N w_k^j \right] \frac{1}{N} w_n^i$.
\item \textbf{Output:} Return $\{\bm{Y_n^{l,i}}, V^i \}_{i=1}^N$.
\end{enumerate}
\vspace{-0.2cm}
\hrulefill
\end{center}
We remark that Step 3(i) of \autoref{alg:pf}  corresponds to the resampling step, which can be multinomial, residual, stratified or systematic (see \cite{DCM05} for a comparison.) In this paper, we use multinomial resampling. It is well-known that $\sum_{i=1}^N V^i ~ \varphi(\bm{Y_n^{l,i}})$ and $\sum_{i=1}^N W_n^i ~\varphi(\bm{Y_n^{l,i}})$ will converge almost surely as $N\to \infty$ to $\gamma_n^{(\theta,l)}(\varphi)$ and $\eta_n^{(\theta,l)}(\varphi)$, respectively. Moreover, the sum $\sum_{i=1}^N V^i ~ \varphi(\bm{Y_n^{l,i}})$ is an unbiased estimator of the unnormalized smoother $\gamma_n^{(\theta,l)}(\varphi)$ (see e.g. \cite{P04, RJ20}). 

A quantity of interest that one aims to estimate is the level difference $\eta_n^{(\theta,l)}(\varphi) - \eta_n^{(\theta,l-1)}(\varphi) $ (or $\gamma_n^{(\theta,l)}(\varphi) - \gamma_n^{(\theta,l-1)}(\varphi) $) for some $l\in\mathbb{N}$. This quantity appears in multiple algorithms like those found in the ML context (e.g. \cite{DH11, Giles08, MLPF17}) and in some debiasing schemes (e.g. \cite{M11, RG15}). In particular, it appears in the debiasing scheme within the correction step of the main algorithm in \autoref{sec:ub_infer}.
In \autoref{subsec:MLMC}, we explain how to obtain coupled (possibly biased) samples from the law of the discretized L\'{e}vy-driven process in \ref{eq:disc_levy_driv}. Then in \autoref{subsec:cpf_and_ub_level}, we present an unbiased estimator of the level difference
\begin{align}
\label{eq:level_diff}
\gamma_n^{(\theta,l)}(\varphi) - \gamma_n^{(\theta,l-1)}(\varphi),
\end{align}
for given $(\theta, l) \in \Theta \times \mathbb{N}$ and function $\varphi \in \mathcal{B}_b(\mathbb{R}^d)$. Before moving forward, we require the following assumptions on the Feynman-Kac model defined by $(M_n^{(\theta,l)}, G_n^{(\theta)})$. 
\begin{ass}
\label{ass:Gn_and_Mn}
There are $c > 1$ and $C > 0$, such that for all $n,l\in \mathbb{N}$, $\varphi \in\textrm{Lip}_{\|\cdot\|_2}(\mathbb{R}^d) \cap \mathcal{B}_b(\mathbb{R}^d)$, and $(\theta, x,x')\in \Theta \times \mathbb{R}^d \times \mathbb{R}^d$, we have
\begin{enumerate}[(i)]
\item $c^{-1} \leq G_n^{(\theta)}(x) \leq c$.
\item $G_n^{(\theta)}$ is Lipschitz, i.e., $\|G_n^{(\theta)}(x)-G_n^{(\theta)}(x') \|_2 \leq C \|x-x'\|_2$.
\item $M_n^{(\theta,l)}(\varphi)$ is Lipschitz, i.e., $\|M_n^{(\theta,l)}(\varphi)(x)-M_n^{(\theta,l)}(\varphi)(x') \|_2 \leq C \|x-x'\|_2$.
\end{enumerate}
\end{ass}
\subsection{A Multilevel Monte Carlo Method}
\label{subsec:MLMC}
First, let us assume that $\{Y_t\}_{t\in [0,1]}$ is a diffusion process that is driven by a Wiener process. Suppose that we are interested in estimating the expectation $\mathbb{E}[\varphi(Y_1)]:=\zeta_1(\varphi)$, where $\zeta_1$ is the law of the process $\{Y_t\}_{t\in [0,1]}$ at time 1. Furthermore, assume that one has to work with a discretization of the the diffusion process. Let $L\in \mathbb{N}$ be the level of discretization and define $\zeta_1^L(\varphi) := \mathbb{E}[\varphi(Y_1^L)]$ the expectation w.r.t. the law $\zeta_1^L$ associated with the discretization scheme at time 1. The standard Monte Carlo (MC) provides an unbiased estimator of this expectation through 
$$
\zeta_1^{L,MC}(\varphi) := \frac{1}{N} \sum_{i=1}^N \varphi(Y_1^{L,i}),
$$
where $\{Y_1^{L,i}\}_{i=1}^N$ are i.i.d. samples from $\zeta_1^L$. Then the mean square error (MSE) of the estimator is
\begin{align*}
\mathbb{E}\left[\left(\zeta_1^{L,MC}(\varphi) - \zeta_1(\varphi) \right)^2 \right] &=\mathbb{E}\left[\left(\zeta_1^{L,MC}(\varphi) -  \zeta_1^L(\varphi) + \zeta_1^L(\varphi)- \zeta_1(\varphi) \right)^2 \right] \\
&= \frac{\mathbb{V}ar[\varphi(Y_1^L)]}{N} +  \left( \zeta_1^L(\varphi)- \zeta_1(\varphi) \right)^2,
\end{align*}
where the first term on the right is the variance and the second term is the bias squared which is of order $\mathcal{O}(\Delta_L^2)$. To get an MSE of order $\mathcal{O}(\epsilon^2)$, for a given $\epsilon>0$, one needs to choose $N=\mathcal{O}(\epsilon^{-2})$ and $\Delta_L=\mathcal{O}(\epsilon)$. If we assume that the cost of sampling $Y_1^L$ from $\zeta_1^L$ is $C_L = \mathcal{O}(\Delta_L^{-1})$, then to achieve an MSE of order $\mathcal{O}(\epsilon^2)$, the cost of the MC estimator will be $\mathcal{O}(N\Delta_L^{-1})=\mathcal{O}(\epsilon^{-3})$. On the other hand, MLMC, which is proposed in \cite{Giles08} (see also \cite{MLPF17}) for diffusions driven by a Wiener process, achieves this MSE at a lower cost. The idea of MLMC method is to write $\zeta_1^L(\varphi)$ as a telescoping sum
\begin{align*}
\zeta_1^L(\varphi) = \zeta_1^0(\varphi) + \sum_{l=1}^L [\zeta_1^l(\varphi) -  \zeta_1^{l-1}(\varphi)].
\end{align*}
The sum is estimated by approximating $\zeta_1^0(\varphi)$ through a standard MC method with $N_0$ independent samples, and approximating the differences $\zeta_1^l(\varphi) - \zeta_1^{l-1}(\varphi)$ by independently simulating $N_l$ samples $\{Y_1^{l,i}, Y_1^{l-1,i}\}_{i=1}^{N_l}$ from an appropriate coupling of the diffusion process. Let $\{Y_t\}_{t\in [0,1]}$ be the solution of an SDE driven by a Wiener process and approximated with an Euler discretization, then the coupling emerges from concatenating the fine increments of the Wiener process to generate the coarse increments. The MLMC estimator of $\zeta_1^L(\varphi)$ is then
$$
\zeta_1^{L,MLMC} := \frac{1}{N_0} \sum_{i=1}^{N_0} \varphi(Y_1^{0,i}) +\sum_{l=1}^L  \frac{1}{N_l} \sum_{i=1}^{N_l} [\varphi(Y_1^{l,i}) - \varphi(Y_1^{l-1,i})].
$$ 
Similar to before, the overall MSE for the ML estimator $\zeta_1^{L,MLMC}$ can be expressed as
$$
\mathbb{E}\left[\left(\zeta_1^{L,MLMC}(\varphi) - \zeta_1(\varphi) \right)^2 \right] =  \sum_{l=0}^L \frac{\mathbb{V}ar[\varphi(Y_1^l)-\varphi(Y_1^{l-1})]}{N_l} + \left( \zeta_1^L(\varphi)- \zeta_1(\varphi) \right)^2,
$$
with the convention $\varphi(Y_1^{-1})=0$. The sum on the right hand side is the variances and the other term is the bias squared. Let $V_l=\mathbb{V}ar[\varphi(Y_1^l)-\varphi(Y_1^{l-1})]$, then the dependent coupling aforementioned will induce the variance to be a function of $\Delta_l$. MLMC aims at reducing the variance term $\sum_{l=0}^L V_l/N_l$, leaving unchanged the bias due to the Euler discretization. The level $L$ and the number of samples $\{N_l\}_{l=0}^L$ are chosen to balance these terms so that the MSE is of $\mathcal{O}(\epsilon^2)$, for a given $\epsilon>0$, and that the overall computational complexity is minimized. In particular, $L$ is chosen so that the bias term is $\mathcal{O}(\epsilon^2)$, and given $C_l$ and $V_l$ as functions of $\Delta_l$, the total cost $\sum_{l=0}^L C_l N_l$ is minimized with the constrain $\sum_{l=0}^L V_l/N_l=\mathcal{O}(\epsilon^2)$ by optimizing $\{N_l\}_{l=0}^L$. In \cite{Giles08} the author shows that the cost to achieve such an MSE using an MLMC estimator is
\begin{align*}
\text{Cost} = \left\{ 
\begin{array}{lll}
\mathcal{O}(\epsilon^{-2}\log(\epsilon)^2) & \text{if} & \beta = 1,\\
\mathcal{O}(\epsilon^{-2}) & \text{if} & \beta > 1,\\
\mathcal{O}(\epsilon^{-2-(1-\beta)/\alpha}) & \text{if} & 0<\beta < 1,\\
\end{array} \right.
\end{align*}
where $\alpha$ and $\beta$ are respectively the weak and strong error rates of the discretization scheme.

Nevertheless, the situation is more complicated when SDEs are purely driven by general L\'{e}vy processes. An approach based on Poisson thinning has been suggested in \cite{GX12} for pure-jump diffusion and by \cite{FKSS14} for general L\'{e}vy processes. However, an alternative construction based on the L\'{e}vy-It\^{o} decomposition is proposed in \cite{DH11}. The authors of that article provide a procedure to obtain coupled increments for the L\'{e}vy process on $[0,1]$ which then can be used in \ref{eq:disc_levy_driv} to generate samples $\{Y_1^{l,i}, Y_1^{l-1,i}\}_{i=1}^{N_l}$ from a coupled-kernel induced by the Euler scheme. This construction is employed in our paper and we summarize it in \Cref{alg:coupled_disc_levy,alg:disc_coupled_levy_driv}. We introduce the following necessary notations.
\begin{definition}
\label{def:coupling}
Let $l\in \mathbb{N}$, $\theta\in \Theta$ and $\check{y}_n^l = (y_n^l,y_n^{l-1})$. 
Define the kernel $M_n^{(\theta,l,l-1)}:[\mathbb{R}^d \times \mathbb{R}^d] \times [\sigma(\mathbb{R}^d) \otimes \sigma(\mathbb{R}^d)]\to \mathbb{R}_+$
a coupling of the transition kernels $M_n^{(\theta,l)}(y_n^l,\cdot)$ and $M_n^{(\theta,l-1)}(y_n^{l-1},\cdot)$ such that for any $A\in \sigma(\mathbb{R}^d)$, we have
\begin{align*}
&\int_{A \times \mathbb{R}^d} M_n^{(\theta,l,l-1)}(\check{y}_{n-1}^l,d\check{y}_n^l) = M_n^{(\theta,l)}(y_{n-1}^l,A) \quad \text{and}\quad \\
& \int_{ \mathbb{R}^d \times A} M_n^{(\theta,l,l-1)}(\check{y}_{n-1}^l,d\check{y}_n^l) = M_n^{(\theta,l-1)}(y_{n-1}^{l-1},A).
\end{align*}
In addition, for $(y,y') \in\mathbb{R}^d \times \mathbb{R}^d$ and $\varphi\in \mathcal{B}_b(\mathbb{R}^d \times \mathbb{R}^d)$, we introduce the following notation
$$
M_n^{(\theta,l,l-1)}(\varphi)(y,y') :=  \int_{\mathbb{R}^{2d}} \varphi(\check{y}_n^l) M_n^{(\theta,l,l-1)}((y,y'),d\check{y}_n^l).
$$
\end{definition}
%
%
\begin{center}
\captionsetup[algorithm]{style=algori}
\captionof{algorithm}{Coupled increments of the L\'{e}vy process on $[0,1]$}
\label{alg:coupled_disc_levy}
\raggedright
\begin{enumerate}
\item \textbf{Input:} $l$, $\delta_{l-1}$, $\lambda_l$, $\mu^l(dx)$ and $\theta$.
\item Run \autoref{alg:disc_levy} with input $l$, $\lambda_l$, $\mu^l(dx)$ and $\theta$ to return $K^l$, $\{T_i^l\}_{i=0}^{K^l}$ and $\{\Delta L_{T_i^l}^l\}_{i=1}^{K^l}$. Set $\widehat{K}^{l-1}=0$.
\item Generate jump times and heights at level $l-1$: For $ i = 1, \cdots, K^l$, if $|\Delta L_{T_i^l}^l| \geq \delta_{l-1}$, set $\widehat{T}_i^{l-1} = T_i^l$, $\Delta L_{\widehat{T}_i^{l-1}}^{l-1} = \Delta L_{T_i^l}^l$ and $ \widehat{K}^{l-1} = \widehat{K}^{l-1} + 1$.
\item Refine jump times: Set $i=1$, and $T_0^{l-1}=0$.
\begin{itemize}
\item[(i)] If $\Big\{\widehat{T}_j^{l-1}>T_{i-1}^{l-1}~:~ j\in\{1,\cdots,\widehat{K}^{l-1}\} \Big\}\neq \phi$, then  $T_i^{l-1} = \min \Bigg\{T_{i-1}^l+\Delta_{l-1},~ \min \Big\{\widehat{T}_j^{l-1}>T_{i-1}^{l-1}~:~ j\in\{1,\cdots,\widehat{K}^{l-1}\} \Big\} \Bigg\}$; otherwise $T_i^{l-1} = \min \Big\{T_{i-1}^l+\Delta_{l-1},~ 1 \Big\}$.
\item[(ii)] If $T_i^{l-1} = \widehat{T}_j^{l-1}$ for some $j\in \{1,\cdots,\widehat{K}^{l-1}\}$, set $\Delta L_{T_i^{l-1}} =  \Delta L_{\widehat{T}_j^{l-1}}$; otherwise $\Delta L_{T_i^{l-1}} =  0$.
\item[(iii)] if $T_i^{l-1} =1$, set $K^{l-1} = i$; otherwise $i=i+1$ and go to (i).
\end{itemize}
\item \textbf{Output:} Return $K^l$, $K^{l-1}$, $\{T_i^l\}_{i=1}^{K^l}$, $\{T_i^{l-1}\}_{i=1}^{K^{l-1}}$, $\{\Delta L_{T_i^l}^l\}_{i=1}^{K^l}$ and $\{\Delta L_{T_i^{l-1}}^{l-1}\}_{i=1}^{K^{l-1}}$.
\end{enumerate}
\vspace{-0.1cm}
\hrulefill
\end{center}
%
\autoref{alg:disc_coupled_levy_driv} produces samples from the coupled-kernel $M_n^{(\theta,l,l-1)}$ associated with the discretization in \eqref{eq:disc_levy_driv}. By \cite[Theorem 2]{DH11}, there exists a $0 < C < \infty$ such that for any $(y,\theta,l) \in \mathbb{R}^d\times\Theta\times \mathbb{N}$
\begin{align}
\label{eq:DH_Thm}
\int_{\mathbb{R}^{2d}} |y_n^l - y_n^{l-1} |^2 M_n^{(\theta,l,l-1)}((y,y),d\check{y}_n^l) \leq C \Delta_l^\beta,
\end{align}
where $\beta$ is the strong error rate of the Euler scheme presented in \autoref{subsec:Euler_sde} that can be obtained from the bounds in \eqref{eq:strong_err_Euler}. Moreover, in \cite[Theorem 1]{DH11} the author show that if $\int (\frac{|x|^2}{\delta_l} \wedge 1) ~\nu(dx) \leq g(\delta_l)$ for some decreasing and invertible function $g:(0,\infty) \to (0,\infty)$ and all $\delta_l>0$, then the MSE $:=\mathbb{E}[(\zeta_1^{L,MLMC}(\varphi) - \zeta_1(\varphi))^2 ]$ satisfies the following bounds:
\begin{enumerate}[(i)]
\item If $g(\delta_l) \leq 1/(\delta_l \log(1/\delta_l)^{1+\gamma})$ as $\delta_l\to 0 $ for some $\gamma >0$, and $\Sigma = 0$, then the Cost $\leq C$, for some $C>0$ and
\begin{align*}
\text{MSE} \leq 1/C.
\end{align*}
\item If $g(\delta_l) \leq \log(1/\delta_l)^{\gamma}/\delta_l $ as $\delta_l\to 0 $ for some $\gamma\geq 1/2$, then the Cost $\leq C$, for some $C>0$ and
\begin{align*}
\text{MSE} \leq \frac{1}{C} \log(C)^{2(1+\gamma)}.
\end{align*}
\item If there exists some $\gamma >0$ such that $g(\delta_l) \leq \frac{1}{2} g(\gamma\delta_l/2)$ for all sufficiently small $\delta_l>0$, then the Cost $\leq C$, for some $C>0$ and
\begin{align*}
\text{MSE} \leq C [g^{-1}(C)]^2.
\end{align*}
\end{enumerate}

%
%
\begin{center}
\captionsetup[algorithm]{style=algori}
\captionof{algorithm}{Coupled sampling for L\'{e}vy-driven SDEs}
\label{alg:disc_coupled_levy_driv}
\raggedright
\begin{enumerate}
\item \textbf{Input:} $Y_{0}^l$, $Y_0^{l-1}$, $l$, $\lambda_l$, $F_0^l$, $\delta_{l-1}$, $F_0^{l-1}$, $\mu^l(dx)$ and $\theta$.
\item Call \autoref{alg:coupled_disc_levy} to return $K^l$, $K^{l-1}$, $\{T_i^l\}_{i=1}^{K^l}$, $\{T_i^{l-1}\}_{i=1}^{K^{l-1}}$, $\{\Delta L_{T_i^l}^l\}_{i=1}^{K^l}$ and $\{\Delta L_{T_i^{l-1}}^{l-1}\}_{i=1}^{K^{l-1}}$. 
\item For $i=1, \cdots,K^l$, sample $W_{T_i^l}$ from $\mathcal{N}_r(0,T_i^l I_r)$. Notice that $\{W_{T_i^{l-1}}\}_{i=1}^{K^{l-1}} \subset \{W_{T_i^{l}}\}_{i=1}^{K^{l}}$.
\item With $\{T_i^l\}_{i=1}^{K^l}$, $\{\Delta L_{T_i^l}^l\}_{i=1}^{K^l}$, $\{W_{T_i^l}-W_{T_{i-1}^l}\}_{i=1}^{K^l}$ and $F_0^l$, generate the increments $\{\Delta X_{T_i^l}^l\}_{i=1}^{K^l}$ through \ref{eq:disc_levy}. With $\{T_i^{l-1}\}_{i=1}^{K^{l-1}}$, $\{\Delta L_{T_i^{l-1}}^{l-1}\}_{i=1}^{K^{l-1}}$, $\{W_{T_i^{l-1}}-W_{T_{i-1}^{l-1}}\}_{i=1}^{K^{l-1}}$ and $F_0^{l-1}$, generate the increments $\{\Delta X_{T_i^{l-1}}^{l-1}\}_{i=1}^{K^{l-1}}$ via \ref{eq:disc_levy}.
\item Given the coupled increments $\{\Delta X_{T_i^l}^l\}_{i=1}^{K^l}$ and $\{\Delta X_{T_i^{l-1}}^{l-1}\}_{i=1}^{K^{l-1}}$, compute  $Y_1^l$ and $Y_1^{l-1}$ with $Y_{0}^l$ and $Y_0^{l-1}$, respectively, via the recursion in \ref{eq:disc_levy_driv}.
\item \textbf{Output:} Return $Y_1^l$ and $Y_1^{l-1}$.
\end{enumerate}
\vspace{-0.2cm}
\hrulefill
\end{center}
%
%

\subsection{Coupled PF and Unbiased Level Difference Estimation}
\label{subsec:cpf_and_ub_level}
As assumed previously, let $\varphi \in \mathcal{B}_b(\mathbb{R}^d)$ and $\theta\in \Theta$ be fixed. Consider the level difference in \eqref{eq:level_diff}. If one can estimate the difference using a coupling of $(\gamma_n^l, \gamma_n^{l-1})$, it is then possible to obtain a variance reduction. As seen in the previous subsection this is a crucial requirement in the ML estimation, where one wants the variance of the differences to decay as $l$ increases. The authors of \cite{JKLZ18} introduced an effective general coupling of PFs which we use to unbiasedly estimate the level difference in \ref{eq:level_diff}. Coupled particle filter (CPF) is usually used in multilevel PF (MLPF) algorithms (see \cite{MLPF17} and \cite{AKP19} for more details on MLPF for models driven by Wiener and  L\'{e}vy processes, respectively) or debiasing schemes (see e.g. \cite{CFJLV21,RJ20}), similar to MLMC, one writes the expectation of some function w.r.t. smoothing distribution as a telescoping sum, then independently run a particle filter at the first level and CPFs for the higher levels. In \autoref{alg:CPF} we present the CPF used in \cite{MLPF17,AKP19} which will generate samples from the coupling of $(\gamma_n^l, \gamma_n^{l-1})$.
\begin{definition}
\label{def:G_check}
Let $l\in \mathbb{N}$, $\theta\in \Theta$ and $\check{y}_n^l = (y_n^l,y_n^{l-1})$. Define the potential 
$$
\widecheck{G}_n^{(\theta)}(\check{y}^l) := \frac{1}{2} (G_n^{(\theta)}(y_n^l) + G_n^{(\theta)}(y_n^{l-1}) ).
$$
\end{definition}
%
%
%
%
%
\begin{center}
\captionsetup[algorithm]{style=algori}
\captionof{algorithm}{Coupled particle filter}
\label{alg:CPF}
\raggedright
\begin{enumerate}
\item \textbf{Input:} Level $l$, number of particles $N_l$, $Y_0$, time $n$, $\theta$,  $\lambda_l$, $F_0^l$, $\delta_{l-1}$, $F_0^{l-1}$, $\mu^l(dx)$ and potential functions $\{\widecheck{G}_k^{(\theta)}\}_{k=1}^n$.
\item Initialization: For $i = 1,\cdots,N$ set $\bm{Y_0^{l,i}}=\bm{Y_0^{l-1,i}}=Y_0^{l,i}=Y_0^{l-1,i}=Y_0$.
\item For $k=1,\cdots,n$, do:
\begin{enumerate}[(i)]
\item  For $i = 1,\cdots,N_l$ call \autoref{alg:disc_coupled_levy_driv} with $Y_0^{l,i}=Y_{k-1}^{l,i}$ and $Y_0^{l-1,i}=Y_{k-1}^{l-1,i}$ in the input to return $Y_1^{l,i}$ and $Y_1^{l-1,i}$ and set $\bm{\widehat{Y}_k^{l,i}} = (\bm{Y_{k-1}^{l,i}}, Y_1^{l,i})$ and $\bm{\widehat{Y}_k^{l-1,i}} = (\bm{Y_{k-1}^{l-1,i}}, Y_1^{l-1,i})$.
\item For $i = 1,\cdots,N_l$ compute $w_k^i = \widecheck{G}_k^{(\theta)}(\check{y}_k^{l,i})$ and set $W_k^i = w_k^i/\sum_{j=1}^{N_l} w_k^j$.
\item Resample the particles $\{\bm{\widehat{Y}_k^{l,i}}\}_{i=1}^N$ and $\bm{\widehat{Y}_k^{l-1,i}}$ according to the normalized weights $\{W_k^i\}_{i=1}^{N_l}$ by sampling the particles indices from a multinomial distribution. Denote the resampled particles $\{\bm{Y_k^{l,i}}\}_{i=1}^{N_l}=\{(Y_0^{l,i},\cdots,Y_k^{l,i})\}_{i=1}^{N_l}$ and $\{\bm{Y_k^{l-1,i}}\}_{i=1}^N=\{(Y_0^{l-1,i},\cdots,Y_k^{l-1,i})\}_{i=1}^N$.
\end{enumerate}
\item For $i=1,\cdots,N_l$ set $\widecheck{V}^i = \left[ \prod_{k=1}^{n-1} \frac{1}{N_l} \sum_{j=1}^{N_l} w_k^j \right] \frac{1}{N_l} w_n^i$.
\item \textbf{Output:} Return $\{\bm{\widecheck{Y}_n^i}\}_{i=1}^{N_l}=\{ (\bm{Y_n^{l,i}},\, \bm{Y_n^{l-1,i}})\}_{i=1}^{N_l}$ and $\{\widecheck{V}^i \}_{i=1}^{N_l}$.
\end{enumerate}
\vspace{-0.2cm}
\hrulefill
\end{center}
We then use the output of \autoref{alg:CPF} in \autoref{alg:ub_level_diff} to unbiasedly estimate the level differences in \ref{eq:level_diff}. 
%
%
\begin{center}
\captionsetup[algorithm]{style=algori}
\captionof{algorithm}{Unbiased estimation of level differences}
\label{alg:ub_level_diff}
\raggedright
\begin{enumerate}
\item \textbf{Input:} Level $l$, number of particles $N_l$, $Y_0$, time $n$, $\theta$,  $\lambda_l$, $F_0^l$, $\delta_{l-1}$, $F_0^{l-1}$, $\mu^l(dx)$ and potential functions $\{G_k^{(\theta)},\widecheck{G}_k^{(\theta)}\}_{k=1}^n$.
\item Call \autoref{alg:CPF} to return $\{\bm{\widecheck{Y}_n^i}\}_{i=1}^{N_l}=\{ (\bm{Y_n^{l,i}},\, \bm{Y_n^{l-1,i}})\}_{i=1}^{N_l}$ and $\{\widecheck{V}^i \}_{i=1}^{N_l}$. Then perform the following
\begin{enumerate}[(i)]
\item For $i=1,\cdots,N_l$, set $\bm{Y_n^i} = \bm{Y_n^{l,i}}$ and $V^i = \widecheck{V}^i H^{(\theta,l)}(\check{y}_n^{l,i})$, where 
\begin{align}
H^{(\theta,l)}(\check{y}_n^{l,i}) = \prod_{k=1}^n \frac{G_k^{(\theta)}(y_k^{l,i})}{\widecheck{G}_k^{(\theta)}(\check{y}_k^{l,i})}.
\end{align}
\item For $i=N_l+1,\cdots,2N_l$, set $\bm{Y_n^i} = \bm{Y_n^{l-1,i-N_l}}$ and $V^i = -\widecheck{V}^{i-N_l} H^{(\theta,l-1)}(\check{y}_n^{l,i-N_l})$, where 
\begin{align}
H^{(\theta,l-1)}(\check{y}_n^{l,i-N_l}) = \prod_{k=1}^n \frac{G_k^{(\theta)}(y_k^{l-1,i-N_l})}{\widecheck{G}_k^{(\theta)}(\check{y}_k^{l,i-N_l})}.
\end{align}
\end{enumerate}
\item \textbf{Output:} Return $\{\bm{Y}_n^{l,i} ,\, V^{l,i}\}_{i=1}^{2N_l}$.
\end{enumerate}
\vspace{-0.2cm}
\hrulefill
\end{center}
%
Since the PF method provides an unbiased estimator of the unnormalized smoothing density, one can easily use that to show (see Proposition 2.4 in \cite{CFJLV21})
$$
\mathbb{E}\left[\sum_{i=1}^{2N_l} V^i \, \varphi(\bm{Y}_n^{l,i}) \right]= \gamma_n^{(\theta,l)}(\varphi) - \gamma_n^{(\theta,l-1)}(\varphi).
$$ 
\begin{rem}
\begin{enumerate}[(i)]
\item The choice of $\widecheck{G}_n^{(\theta)} $ in this paper is similar to that in \cite{CFJLV21}. However, in the original work \cite{JKLZ18} where this methodology was employed, the authors take
$$
\widecheck{G}_n^{(\theta)}(\check{y}_n^l) =\max\{ G_n^{(\theta)}(y_n^l), G_n^{(\theta)}(y_n^{l-1}) \}.
$$
In fact, it is sufficient for unbiasedness to choose $\widecheck{G}_n^{(\theta)}$ such that $H^{(\theta,l)}$ and $H^{(\theta,l-1)}$ are uniformly upper bounded; this way the variance of the weights is independent of the length of the observations $n$, hence, does not grow as $n\to \infty$.
\item The unbiasedness of the level-difference is achieved regardless of the number of particles.

\item We remark that one can combine \Cref{alg:pf,alg:ub_level_diff} along with the randomization techniques in \cite{M11,RG15} to provide an unbiased estimator of $\gamma_n^{(\theta)}(\varphi)$, with $\gamma_n^{(\theta)}$ as in \eqref{eq:eta_n_and_gamma_n}. This leads to unbiased inference w.r.t. the normalized smoother $\eta_n^{(\theta)}(\varphi)$ (see \cite[Proposition 2.8]{CFJLV21}.)
\end{enumerate}

\end{rem}
We state the following Lemma which some of our claims we will rely on.
\begin{lem}
\label{lem:bound_on_M}
Under \Cref{ass:f_and_v,ass:delta_l}, there exists a $C< \infty$ such that for any $l\in \mathbb{N}$ and $\theta \in \Theta$,
\begin{align*}
\sup_{\varphi\in \mathcal{A}} \sup_{y\in\mathbb{R}^d} \left|M_n^{(\theta,l)}(\varphi)(y)- M_n^{(\theta,l-1)}(\varphi)(y) \right| &\leq C \Delta_l^{\beta/2} 
\end{align*}
where $\mathcal{A}=\mathcal{B}_b(\mathbb{R}^{d}) \cap \textrm{Lip}_{\|\cdot\|_2}( \mathbb{R}^{d})$ and $\beta$ is the strong error rate of the discretization scheme for the SDE in \eqref{eq:levy_driv_sde}. 
\end{lem}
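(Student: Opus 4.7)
The plan is to exploit the coupling $M_n^{(\theta,l,l-1)}$ introduced in \autoref{def:coupling}, combine it with the strong-error bound \eqref{eq:DH_Thm} inherited from Theorem 2 of \cite{DH11}, and pass from the $L^2$ pathwise estimate there to the $L^1$-type kernel estimate here via Cauchy--Schwarz. First, I would use the marginal property in \autoref{def:coupling} with the starting point $y$ duplicated on both components: since $\int_{A \times \mathbb{R}^d} M_n^{(\theta,l,l-1)}((y,y),d\check{y}_n^l) = M_n^{(\theta,l)}(y,A)$ and symmetrically for the $l-1$ marginal, one may write the difference as a single integral against the coupled kernel,
\begin{equation*}
M_n^{(\theta,l)}(\varphi)(y) - M_n^{(\theta,l-1)}(\varphi)(y) = \int_{\mathbb{R}^{2d}} \bigl[\varphi(y_n^l) - \varphi(y_n^{l-1})\bigr] \, M_n^{(\theta,l,l-1)}\bigl((y,y), d\check{y}_n^l\bigr).
\end{equation*}

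Next, because $\varphi \in \mathcal{A}$ is Lipschitz with respect to $\|\cdot\|_2$, there exists a constant (uniform over $\mathcal{A}$ under the paper's normalization of the class) with $|\varphi(y_n^l) - \varphi(y_n^{l-1})| \leq C_\varphi \|y_n^l - y_n^{l-1}\|_2$. Taking absolute values inside the integral and applying Cauchy--Schwarz against the probability measure $M_n^{(\theta,l,l-1)}((y,y), \cdot)$ yields
\begin{equation*}
\bigl|M_n^{(\theta,l)}(\varphi)(y) - M_n^{(\theta,l-1)}(\varphi)(y)\bigr| \leq C_\varphi \left( \int_{\mathbb{R}^{2d}} \|y_n^l - y_n^{l-1}\|_2^2 \, M_n^{(\theta,l,l-1)}\bigl((y,y), d\check{y}_n^l\bigr) \right)^{1/2}.
\end{equation*}
An appeal to \eqref{eq:DH_Thm} then produces the claimed bound of order $\Delta_l^{\beta/2}$. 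Uniformity in the base point $y$ and in $\theta \in \Theta$ is inherited directly from \eqref{eq:DH_Thm}, whose constant is independent of $y$ and $\theta$ under \autoref{ass:f_and_v} (in particular the uniform bound $\|f_\theta\|_2 \leq C_2$) and \autoref{ass:delta_l}.

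The main obstacle I expect is simply the bookkeeping needed to confirm that the constant on the right-hand side can be taken uniform across the triple $(\varphi,y,\theta)$. Uniformity in $y$ and $\theta$ is the deep content borrowed from \cite{DH11}; uniformity in $\varphi \in \mathcal{A}$ only requires that one works with a normalized class (unit sup-norm and unit Lipschitz seminorm), which is the standard convention in the particle-filter literature and will be implicit in how $\mathcal{A}$ is used in subsequent results. No further regularity of the coupled kernel beyond the second-moment contraction in \eqref{eq:DH_Thm} is needed, which is why the rate is exactly half the strong-error rate $\beta$.
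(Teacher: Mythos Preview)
Your proposal is correct and follows essentially the same route as the paper's proof: represent the difference via the coupling $M_n^{(\theta,l,l-1)}((y,y),\cdot)$ from \autoref{def:coupling}, use the Lipschitz property of $\varphi$, pass to an $L^2$ bound (the paper invokes Jensen where you say Cauchy--Schwarz, which is the same inequality here), and then apply \eqref{eq:DH_Thm}. The only cosmetic difference is the order in which Lipschitz and the $L^1\to L^2$ step are applied; the argument and the resulting constant are identical.
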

\begin{proof}
\begin{align*}
&\Big|M_n^{(\theta,l)}(\varphi)(y)-  M_n^{(\theta,l-1)}(\varphi)(y) \Big|\\ &= \left| M_n^{(\theta,l,l-1)}(\varphi \otimes 1)(y,y) - M_n^{(\theta,l,l-1)}(1\otimes \varphi)(y,y) \right| \quad \text{by \autoref{def:coupling}}\\
&=\Bigg|\int_{\mathbb{R}^{2d}} \varphi(y_n^l) M_n^{(\theta,l,l-1)}((y,y),d\check{y}_n^l) - \int_{\mathbb{R}^{2d}} \varphi(y_n^{l-1}) M_n^{(\theta,l,l-1)}((y,y),d\check{y}_n^l)  \Bigg|\\
&=\Bigg| \int_{\mathbb{R}^{2d}} (\varphi(y_n^l) - \varphi(y_n^{l-1}) ) M_n^{(\theta,l,l-1)}((y,y),d\check{y}_n^l)  \Bigg| \\
&\leq \Bigg( \int_{\mathbb{R}^{2d}} (\varphi(y_n^l) - \varphi(y_n^{l-1}) )^2 M_n^{(\theta,l,l-1)}((y,y),d\check{y}_n^l)  \Bigg)^{1/2} \quad \text{by Jensen's inequality} \\
& \leq \Bigg( \int_{\mathbb{R}^{2d}} |y_n^l - y_n^{l-1} |^2 M_n^{(\theta,l,l-1)}((y,y),d\check{y}_n^l)  \Bigg)^{1/2}  \qquad \text{since }\varphi \in \textrm{Lip}_{\|\cdot\|_2}( \mathbb{R}^{d}).
\end{align*}
Applying the result in \eqref{eq:DH_Thm} in the above inequality concludes the proof.
\end{proof}
\begin{prop}
\label{prop:bound_on_level_diff}
Under \autoref{ass:Gn_and_Mn} (i)-(ii), for any $(\theta,l)\in \Theta\times\mathbb{N}$, $n\geq 0$ and $\varphi \in \textrm{Lip}_{\|\cdot\|_2}(\mathbb{R}^{nd}) \cap \mathcal{B}_b(\mathbb{R}^{nd})$ there exists a $C <\infty$ such that
\begin{align}
\label{eq:bound_on_level_diff}
\left|\gamma_n^{(\theta,l)}(\varphi) - \gamma_n^{(\theta,l-1)}(\varphi)\right| \leq C \Delta_l^{\beta/2}
\end{align}
\end{prop}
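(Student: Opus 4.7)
The plan is to express $\gamma_n^{(\theta,l)}(\varphi)$ via a backward Feynman--Kac recursion as a single integral $M_1^{(\theta,l)}(\phi_1^l)(y_0)$, and then compare levels $l$ and $l-1$ through a telescoping identity whose pieces are each controlled by \autoref{lem:bound_on_M}. For notational clarity I will assume $\varphi$ depends only on $y_n$; the case where $\varphi$ depends on the full trajectory $y_{1:n}$ is handled identically by carrying the earlier coordinates as parameters in the recursion.

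First I define $\phi_n^l := G_n^{(\theta)}\varphi$ and, for $k=n-1,\ldots,1$, set $\phi_k^l(y) := G_k^{(\theta)}(y)\,M_{k+1}^{(\theta,l)}(\phi_{k+1}^l)(y)$, so that $\gamma_n^{(\theta,l)}(\varphi) = M_1^{(\theta,l)}(\phi_1^l)(y_0)$. A backward induction on $k$ using \autoref{ass:Gn_and_Mn}(i)--(iii) shows that each $\phi_k^l$ lies in $\mathcal{A} = \mathcal{B}_b(\mathbb{R}^d) \cap \textrm{Lip}_{\|\cdot\|_2}(\mathbb{R}^d)$, with $\|\phi_k^l\|_\infty \leq c^{n-k+1}\|\varphi\|_\infty$ and a Lipschitz constant bounded uniformly in $l$ and $\theta$. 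Boundedness uses $|G_k^{(\theta)}|\leq c$ and the fact that $M_{k+1}^{(\theta,l)}$ is a Markov kernel, while the Lipschitz property propagates via the estimate on $M_{k+1}^{(\theta,l)}(\cdot)$ in \autoref{ass:Gn_and_Mn}(iii) together with closure of bounded Lipschitz functions under products.

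Then I decompose
$$
\gamma_n^{(\theta,l)}(\varphi) - \gamma_n^{(\theta,l-1)}(\varphi) = M_1^{(\theta,l)}\bigl(\phi_1^l - \phi_1^{l-1}\bigr)(y_0) + \bigl[M_1^{(\theta,l)} - M_1^{(\theta,l-1)}\bigr]\bigl(\phi_1^{l-1}\bigr)(y_0).
$$
The second summand is bounded by $C\,\Delta_l^{\beta/2}$ directly via \autoref{lem:bound_on_M}, since $\phi_1^{l-1}\in\mathcal{A}$ with constants uniform in $l$. For the first summand, the same add-and-subtract trick inside the defining recursion yields
$$
\phi_k^l - \phi_k^{l-1} = G_k^{(\theta)}\bigl[M_{k+1}^{(\theta,l)}(\phi_{k+1}^l - \phi_{k+1}^{l-1}) + \bigl(M_{k+1}^{(\theta,l)} - M_{k+1}^{(\theta,l-1)}\bigr)(\phi_{k+1}^{l-1})\bigr].
$$
Using $\|G_k^{(\theta)}\|_\infty \leq c$, the contractivity $\|M_{k+1}^{(\theta,l)}(\psi)\|_\infty \leq \|\psi\|_\infty$ of Markov kernels, and \autoref{lem:bound_on_M} applied to $\phi_{k+1}^{l-1}\in\mathcal{A}$, this produces a recursion $a_k \leq c\,a_{k+1} + c\,C_k\,\Delta_l^{\beta/2}$ in $a_k := \|\phi_k^l - \phi_k^{l-1}\|_\infty$ with base case $a_n = 0$. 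Iterating $n-1$ times gives $a_1 \leq C'\,\Delta_l^{\beta/2}$, and since $M_1^{(\theta,l)}$ is Markov the first summand is bounded in absolute value by $a_1$.

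The main obstacle is the bookkeeping required in the backward induction to keep $\phi_k^l\in\mathcal{A}$ with Lipschitz and bound constants uniform in $(l,\theta)$, since \autoref{lem:bound_on_M} can only be applied with test functions enjoying such uniform control. Once this is established, the remaining argument is routine telescoping, and the two pieces combine to yield \eqref{eq:bound_on_level_diff} with $C$ depending on $n$, $\|\varphi\|_\infty$, $\mathrm{Lip}(\varphi)$ and the constants in \autoref{ass:Gn_and_Mn}, but independent of $l$ and $\theta$.
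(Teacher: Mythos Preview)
Your argument is correct and is precisely the standard backward Feynman--Kac telescoping that the paper is invoking when it writes ``one can adopt the proof in \cite{CFJLV21}''; the paper gives no self-contained argument beyond that reference, so your write-up is in fact more detailed than what appears here. One small observation: the proposition as stated lists only \autoref{ass:Gn_and_Mn}(i)--(ii), whereas your backward induction showing $\phi_k^l\in\mathcal{A}$ explicitly relies on (iii) to propagate the Lipschitz property through $M_{k+1}^{(\theta,l)}$. This is almost certainly an oversight in the paper's hypothesis list rather than a flaw in your proof, since without (iii) (or an equivalent regularity statement for the Euler kernel) the recursion does not close and \autoref{lem:bound_on_M} cannot be applied at intermediate steps.
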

\begin{proof}
Since \autoref{ass:Gn_and_Mn} (i)-(ii) holds for all $\theta\in \Theta$, the bound in \eqref{eq:bound_on_level_diff} is uniform in $\theta$, therefore one can adopt the proof in \cite{CFJLV21}.
\end{proof}
\begin{theorem}
\label{thm:main_thm}
Under \autoref{ass:Gn_and_Mn}, for any $(\theta,l)\in \Theta\times\mathbb{N}$ and $\varphi\in \textrm{Lip}_{\|\cdot\|_2}(\mathbb{R}^{nd}) \cap \mathcal{B}_b(\mathbb{R}^{nd})$,  there exists a $C<\infty$ such that 
\begin{align}
\mathbb{E}\left[\left( \sum_{i=1}^{2N_l} V^i \, \varphi( \bm{Y}_n^{l,i}) -\left\{ \gamma_n^{(\theta,l)}(\varphi) - \gamma_n^{(\theta,l-1)}(\varphi)\right\} \right)^2\right]\leq C \frac{\Delta_l^{2 \wedge \beta}}{N_l},
\end{align}
where $\{\bm{Y}_n^{l,i} ,\, V^{l,i}\}_{i=1}^{2N_l}$ is the output of \autoref{alg:ub_level_diff} for the given $\theta$.
\end{theorem}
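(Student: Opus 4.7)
The plan is to exploit the unbiasedness noted in the remark preceding the theorem: since $\mathbb{E}[\sum_{i=1}^{2N_l} V^i \varphi(\bm{Y}_n^{l,i})] = \gamma_n^{(\theta,l)}(\varphi) - \gamma_n^{(\theta,l-1)}(\varphi)$, the claimed MSE bound reduces to a variance bound on the estimator. The first reformulation step is to re-express this estimator as the reweighting of a single coupled PF. Setting $\Psi_l(\check{y}_n^l) := H^{(\theta,l)}(\check{y}_n^l)\varphi(y_n^l) - H^{(\theta,l-1)}(\check{y}_n^l)\varphi(y_n^{l-1})$, the construction in \autoref{alg:ub_level_diff} gives $\sum_{i=1}^{2N_l} V^i \varphi(\bm{Y}_n^{l,i}) = \sum_{i=1}^{N_l} \widecheck{V}^i \Psi_l(\bm{\check{Y}}_n^i)$, which is a CPF estimator for the Feynman--Kac model $(M_n^{(\theta,l,l-1)}, \widecheck{G}_n^{(\theta)})$ of \autoref{def:coupling} and \autoref{def:G_check} applied to the integrand $\Psi_l$; the marginal property of the coupled kernel then makes its expectation equal to $\gamma_n^{(\theta,l)}(\varphi) - \gamma_n^{(\theta,l-1)}(\varphi)$, consistent with the unbiasedness identity.

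Next I would invoke the standard $L^2$ variance bound for PF estimators of unnormalised measures (e.g.~the Feynman--Kac arguments in \cite[Thm.~7.4.4]{P04} and the MLPF analysis of \cite{JKLZ18}), which gives $\mathbb{V}ar\!\bigl(\sum_{i=1}^{N_l} \widecheck{V}^i \Psi_l(\bm{\check{Y}}_n^i)\bigr) \leq \frac{C(n)}{N_l}\, \widecheck{\eta}_n^{(\theta,l)}(\Psi_l^2)$ for an appropriate coupled smoother $\widecheck{\eta}_n^{(\theta,l)}$. The two-sided bound on $G_n^{(\theta)}$ from \autoref{ass:Gn_and_Mn}(i) keeps $C(n)$ finite uniformly in $\theta$, and in particular makes $H^{(\theta,l)}$ and $H^{(\theta,l-1)}$ uniformly bounded, which is essential for the above bound to hold.

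It then remains to show $\widecheck{\eta}_n^{(\theta,l)}(\Psi_l^2) \leq C\,\Delta_l^{2\wedge\beta}$. I would split $\Psi_l = H^{(\theta,l)}\bigl[\varphi(y_n^l) - \varphi(y_n^{l-1})\bigr] + \bigl[H^{(\theta,l)} - H^{(\theta,l-1)}\bigr]\varphi(y_n^{l-1})$, apply $(a+b)^2 \leq 2a^2 + 2b^2$, and bound the two pieces separately. For the first piece, Lipschitzness of $\varphi$ together with the uniform boundedness of $H^{(\theta,l)}$ reduce the bound to $\widecheck{\eta}_n^{(\theta,l)}(|y_n^l - y_n^{l-1}|^2)$, which after a change of measure back to the true coupled dynamics of \autoref{alg:disc_coupled_levy_driv} is controlled by $O(\Delta_l^\beta)$ via the strong error estimate \eqref{eq:DH_Thm} applied segment-by-segment. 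For the second piece, a telescoping expansion of $H^{(\theta,l)} - H^{(\theta,l-1)}$ over the $n$ time indices, combined with the Lipschitz continuity of $G_n^{(\theta)}$ from \autoref{ass:Gn_and_Mn}(ii), reduces the bound to a sum of quantities of the form $\widecheck{\eta}_n^{(\theta,l)}(|y_k^l - y_k^{l-1}|^2)$ for each $k \leq n$, each again bounded by $O(\Delta_l^\beta)$.

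The main obstacle is the propagation of the squared coupling error through the resampling steps of the CPF, equivalently through the iterates of the coupled Feynman--Kac semigroup: applying \eqref{eq:DH_Thm} one step at a time is straightforward, but showing that the coupled smoother inherits this decay uniformly across all time steps $k \leq n$ requires the Lipschitz propagation property \autoref{ass:Gn_and_Mn}(iii) on $M_n^{(\theta,l)}(\varphi)$ to tame the non-linearity introduced by normalisation, and a careful change-of-measure from $\widecheck{\eta}_n^{(\theta,l)}$ back to the transition kernel-based dynamics in which \eqref{eq:DH_Thm} is formulated. The $2\wedge\beta$ cap reflects that the weak-error--type contribution that arises in the standard PF variance expansion is $O(\Delta_l^2)$, and this becomes the binding constraint when $\beta > 2$, as occurs in the example considered in \autoref{sec:numer}.
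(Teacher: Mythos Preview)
Your proposal is correct and follows essentially the same approach as the paper, which simply defers the argument to \cite{CFJLV21} with \autoref{lem:bound_on_M} and \eqref{eq:DH_Thm} substituted for the Brownian strong-error inputs there. You identify precisely the right ingredients---unbiasedness reducing MSE to variance, the CPF/Feynman--Kac variance decomposition, the uniform bounds and Lipschitz propagation from \autoref{ass:Gn_and_Mn}, and the L\'evy coupling estimate \eqref{eq:DH_Thm}---and correctly locate the origin of the $2\wedge\beta$ cap; the only cosmetic point is that the intermediate bound $\tfrac{C(n)}{N_l}\,\widecheck{\eta}_n^{(\theta,l)}(\Psi_l^2)$ is a shorthand for the full time-indexed variance expansion you allude to in your final paragraph rather than a single inequality one can quote directly.
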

\begin{proof}
The proof is essentially the same as in \cite{CFJLV21}, except that one needs to use \autoref{lem:bound_on_M} and inequality \eqref{eq:DH_Thm} along the lines. 
\end{proof}
Note that 
\begin{align*}
&\mathbb{E}\left[\left( \sum_{i=1}^{2N_l} V^i \, \varphi(\bm{Y}_n^{l,i})\right)^2\right] \\
&= \mathbb{E}\left[\left( \sum_{i=1}^{2N_l} V^i \, \varphi(\bm{Y}_n^{l,i}) - \left\{ \gamma_n^{(\theta,l)}(\varphi) - \gamma_n^{(\theta,l-1)}(\varphi)\right\} + \left\{ \gamma_n^{(\theta,l)}(\varphi) - \gamma_n^{(\theta,l-1)}(\varphi)\right\} \right)^2\right]\\
&\leq \mathbb{E}\left[\left( \sum_{i=1}^{2N_l} V^i \, \varphi(\bm{Y}_n^{l,i}) -\left\{ \gamma_n^{(\theta,l)}(\varphi) - \gamma_n^{(\theta,l-1)}(\varphi)\right\} \right)^2\right]\\
&\qquad+ \mathbb{E}\left[\left|  \gamma_n^{(\theta,l)}(\varphi) - \gamma_n^{(\theta,l-1)}(\varphi) \right|^2\right].
\end{align*}
By \autoref{thm:main_thm} and \autoref{prop:bound_on_level_diff}, we then have
\begin{align}
\label{eq:bound_on_pf_level_diff}
\mathbb{E}\left[\left( \sum_{i=1}^{2N_l} V^i \, \varphi(\bm{Y}_n^{l,i})\right)^2\right] \leq C \left( \frac{\Delta_l^{2 \wedge \beta}}{N_l} +  \Delta_l^{\beta} \right).
\end{align}
\begin{rem}
\autoref{thm:main_thm} and the discussion right after can be easily established for functions $\varphi: \textrm{Lip}_{\|\cdot\|_2}(\Theta \times\mathbb{R}^{nd}) \cap \mathcal{B}_b(\Theta \times\mathbb{R}^{nd})\to \mathbb{R}$.
\end{rem}

\section{Overall Methodology}
\label{sec:ub_infer}
In this section, we provide the main algorithm of this paper, \autoref{alg:ub_infer}. As previously discussed, the goal is to perform an unbiased inference for the Bayesian model posterior associated to the HMM model in \ref{eq:levy_driv_sde}-\ref{eq:obs}, which is given by
\begin{align}
\label{eq:posterior}
\Pi(d\theta,dy_{1:n}) \propto \overline{\Pi}(d\theta) \prod_{p=1}^n G_p^{(\theta)}(y_p) M_p^{(\theta,\infty)}(y_{p-1},dy_p)~=\overline{\Pi}(d\theta)\gamma_n^{(\theta)}(dy_{1:n}),
\end{align}
where $\theta$ is the static parameter to infer with a prior measure $\overline{\Pi}(d\theta)$, $y_{1:n}$ are the hidden states and $\gamma_n^{(\theta)}(dy_{1:n})$ is the unnormalized filter density defined in \eqref{eq:eta_n_and_gamma_n}. We assume the transition densities $M_n^{(\theta,\infty)}$ cannot be simulated directly, but there are discretized densities $\{M_n^{(\theta,l)}\}_{l\in \mathbb{N}}$ that approximate $M_n^{(\theta,\infty)}$ which one can simulate from. Therefore, we work with the following posterior instead
\begin{align}
\label{eq:disc_posterior}
\Pi^{(l)}(d\theta,dy_{1:n}) \propto \overline{\Pi}(d\theta) \prod_{p=1}^n G_p^{(\theta)}(y_p) M_p^{(\theta,l)}(y_{p-1},dy_p)~=\overline{\Pi}(d\theta)\gamma_n^{(\theta,l)}(dy_{1:n}).
\end{align}
Given some function $\varphi$ that is bounded and measurable on $\Theta\times\mathbb{R}^{nd}$, we are interested in unbiasedly estimating the expectation $\Pi(\varphi)$. \autoref{alg:ub_infer} provides the methodology to do that. The algorithm consists of two main parts; the first is a direct implementation of PMMH \cite{ADH10} at the coarsest level $l_{\text{min}}$, e.g. $l_{\text{min}}=1$, whilst the second part is just running the unbiased level difference scheme in \autoref{alg:ub_level_diff}, conditionally upon the results from the first part at random levels $l$ sampled from a given PMF $P_l = \{p_l\}_{l\in \{l_{\text{min}}+1,\cdots\}}$. The first part of the algorithm does not yield unbiased inference, however the second part does \cite{CFJLV21}. 

%
%
%
\begin{center}
\captionsetup[algorithm]{style=algori}
\captionof{algorithm}{Unbiased estimator of $\Pi(\varphi)$}
\label{alg:ub_infer}
\raggedright
\begin{enumerate}
\item \textbf{Input:} A function $\varphi$ defined on $\Theta \times \mathbb{R}^{nd}$, the number of particles $\{N_l\}_{l\geq l_{\text{min}}}$, initial value $Y_0$, time $n$, $\mu^l(dx)$, $\lambda_l$, $F_0^l$, $\delta_{l-1}$, $F_0^{l-1}$, $\epsilon>0$, potential functions $\{G_k^{(\theta)},\widecheck{G}_k^{(\theta)}\}_{k=1}^n$, a prior $\overline{\Pi}(d\theta)$, a proposal density $q(\theta,\theta')$ for the MH update, a number of iterations $S$, a PMF $P_l = \{p_l\}_{l\in \{l_{\text{min}} + 1,\cdots\}}$, and initial values $(\theta_0, \{\bm{Y}_{n,0}^i ,\, V_0^i\}_{i=1}^N)$ such that $\sum_{i=1}^N V_0^i >0$.
\item \textbf{Part 1}: 

\smallskip
\textbf{Initialize:} Set $\widehat{K}=0$ and $D_{\widehat{K}}=0$.

\smallskip
For $k=1,\cdots,S$, do:
\begin{enumerate}[(i)]
\item Sample $\widetilde{\theta}_k$ from $q(\theta_{k-1},\cdot)$.
\item Run \autoref{alg:pf} with input $l=l_{\text{min}}$, $N$, $Y_0$, $n$, $\widetilde{\theta}_k$ and potentials $\{G_p^{(\widetilde{\theta}_k)}\}_{p=1}^n$ to return the output $\{\bm{\widetilde{Y}}_{n,k}^{l,i}, \widetilde{V}_k^i \}_{i=1}^N$.
\item Set
$$
\alpha= \min\left\{1, \frac{q(\widetilde{\theta}_k,\theta_{k-1}) ~\overline{\Pi}(\widetilde{\theta}_k)~(\sum_{i=1}^N \widetilde{V}_k^i +\epsilon) } {q(\theta_{k-1},\widetilde{\theta}_k) ~\overline{\Pi}(\theta_{k-1}) ~ (\sum_{i=1}^N V_{k-1}^i +\epsilon)} \right\}.
$$
Then accept with probability $\alpha$ and set $(\theta_k, \{\bm{Y}_{n,k}^i ,\, V_k^i\}_{i=1}^N) = (\widetilde{\theta}_k, \{\bm{\widetilde{Y}}_{n,k}^i ,\, \widetilde{V}_k^i\}_{i=1}^N)$, $\widehat{K}\leftarrow \widehat{K} + 1$, $D_{\widehat{K}}=1$ and $(\widehat{\theta}_{\widehat{K}}, \{\widehat{\bm{Y}}_{n,\widehat{K}}^i ,\, \widehat{V}_{\widehat{K}}^i\}_{i=1}^N)=(\theta_k, \{\bm{Y}_{n,k}^i ,\, V_k^i\}_{i=1}^N)$. Otherwise set $(\theta_k, \{\bm{Y}_{n,k}^i ,\, V_k^i\}_{i=1}^N) =(\theta_{k-1}, \{\bm{Y}_{n,k-1}^i ,\, V_{k-1}^i\}_{i=1}^N)$ and $D_{\widehat{K}}\leftarrow D_{\widehat{K}}+1$.
\end{enumerate}

\item \textbf{Part 2}: Given $(\widehat{\theta}_k, \{\widehat{\bm{Y}}_{n,k}^i ,\, \widehat{V}_{k}^i\}_{i=1}^N)_{k=1}^{\widehat{K}}$ and $\{D_k\}_{k=1}^{\widehat{K}}$ from Part 1. For $k=1,\cdots,\widehat{K}$, do:
\begin{enumerate}[(i)]
\item For $i=1,\cdots,N$, set $\bm{Y}_{n,k}^{l_{\text{min}},i}=\widehat{\bm{Y}}_{n,k}^i$ and $W_{k,l_{\text{min}}}^i=D_k~\widehat{V}_k^i/(\sum_{i=1}^N \widehat{V}_k^i+\epsilon)$.
\item Sample $l_k$ from $P_l$.
\item Run \autoref{alg:ub_level_diff} with input $l_k$, $N$, $Y_0$, $n$, $\widehat{\theta}_k$, $\lambda_{l_k}$, $F_0^{l_k}$, $\delta_{l_k-1}$, $F_0^{l_k-1}$, $\mu^{l_k}(dx)$ and potential functions $\{G_p^{(\widehat{\theta}_k)},\widecheck{G}_p^{(\widehat{\theta}_k)}\}_{p=1}^n$, to return $\{\bm{Y}_{n,k}^{l_k,i} ,\, \overline{V}_k^{l_k,i}\}_{i=1}^{2N_l}$. For $i=1,\cdots,N$, set 
$$
W_{k,l_k}^i = \frac{\overline{V}_k^{l_k,i}}{p_{l_k}(\sum_{i=1}^N \widehat{V}_k^i+\epsilon)}.
$$
\end{enumerate}
\item \textbf{Output:} Return 

\begin{align}
\label{eq:ub_mean}
\Pi_{ub}(\varphi) := \frac{\sum_{k=1}^{\widehat{K}} \left\{  \sum_{i=1}^N W_{k,l_{\text{min}}}^i ~\varphi(\widehat{\theta}_k,\bm{Y}_{n,k}^{l_{\text{min}},i})  + \sum_{i=1}^{2N} W_{k,l_k}^i~ \varphi(\widehat{\theta}_k,\bm{Y}_{n,k}^{l_k,i})  \right\} }{\sum_{k=1}^{\widehat{K}} \left\{ \sum_{i=1}^N W_{k,l_{\text{min}}}^i  + \sum_{i=1}^{2N} W_{k,l_k}^i \right\} }.
\end{align}

\end{enumerate}
\vspace{-0.2cm}
\hrulefill
\end{center}

\begin{rem}
\begin{enumerate}[(i)]
\item In \autoref{alg:ub_infer}, it is important to note that PMMH is only implemented for the coarsest level, which makes it relatively cheap. In addition, the IS corrections in Part 2 may be calculated in parallel since \autoref{alg:ub_level_diff} is run independently for each $\widehat{\theta_k}$. 

\item As suggested in \cite{CFJLV21}, based on \eqref{eq:bound_on_pf_level_diff} one may take $N_{l_{\text{min}}} = \overline{N}$, $N_l = \widetilde{N}$, with $\widetilde{N} < \overline{N}$, and $p_l\propto 2^{-3l/2}$ for $l>l_{\text{min}}$.

\item Note that the estimator defined in \eqref{eq:ub_mean} is equivalent to the more expensive unbiased estimator
\begin{align}
\label{eq:ub_mean1}
\widehat{\Pi_{ub}(\varphi)} :=\frac{\sum_{k=1}^S \left\{  \sum_{i=1}^N W_{k,l_{\text{min}}}^i ~\varphi(\theta_k,\bm{Y}_{n,k}^{l_{\text{min}},i})  + \sum_{i=1}^{2N} W_{k,l_k}^i~ \varphi(\theta_k,\bm{Y}_{n,k}^{l_k,i})  \right\} }{\sum_{k=1}^S \left\{ \sum_{i=1}^N W_{k,l_{\text{min}}}^i  + \sum_{i=1}^{2N} W_{k,l_k}^i \right\} },
\end{align}
where $(\{W_{k,l_{\text{min}}}^i ,\bm{Y}_{n,k}^{l_{\text{min}},i}\}_{i=1}^N)_{k=1}^S$ and $(\{W_{k,l_k}^i, \bm{Y}_{n,k}^{l_k,i}\}_{i=1}^{2N})_{k=1}^S$ are the output of running Part 2 in \autoref{alg:ub_infer} conditional upon $(\theta_k, \{\bm{Y}_{n,k}^i ,\, V_k^i\}_{i=1}^N)_{k=1}^S$ from Part 1.
\end{enumerate}
\end{rem}

\section{Main Theoretical Results}
In this section we present our theoretical results for the unbiased estimator given in \eqref{eq:ub_mean1} which also apply to the estimator $\Pi_{ub}(\varphi)$ in \eqref{eq:ub_mean}. Let 
$$
\zeta_k(\varphi) :=\sum_{i=1}^N  V_k^i~ \varphi(\theta_k,\bm{Y}_{n,k}^{l_{\text{min}},i}) + p_{l_k}^{-1} \sum_{i=1}^{2N} \overline{V}_k^{l_k,i}~ \varphi(\theta_k,\bm{Y}_{n,k}^{l_k,i})
$$
and
\begin{align*}
\Xi_k(\varphi)&:=\sum_{i=1}^N W_{k,l_{\text{min}}}^i ~\varphi(\theta_k,\bm{Y}_{n,k}^{l_{\text{min}},i})  + \sum_{i=1}^{2N} W_{k,l_k}^i~ \varphi(\theta_k,\bm{Y}_{n,k}^{l_k,i}),
\\
&=\Big(\sum_{j=1}^N V_k^j+\epsilon\Big)^{-1} ~\zeta_k(\varphi).
\end{align*}
Then we can write the estimator in \eqref{eq:ub_mean1} as
$$
\widehat{\Pi_{ub}(\varphi)} = \frac{\sum_{k=1}^S \Xi_k(\varphi)}{\sum_{k=1}^S \Xi_k(1)}=\frac{\sum_{k=1}^S \zeta_k(\varphi)}{\sum_{k=1}^S \zeta_k(1)} \,\,( \text{if }\epsilon>0) .
$$
\begin{lem}
Let $(n,\varphi) \in \mathbb{N} \times \left(\textrm{Lip}_{\|\cdot\|_2}(\mathbb{R}^{nd}) \cap \mathcal{B}_b(\mathbb{R}^{nd})\right)$. Assume \autoref{ass:Gn_and_Mn} holds and the strong error rate $\beta \leq 3$. Then there exist choices of positive probability mass function $P_l=\{p_l\}_{l\in \{l_{\text{min}+1,\cdots}\}}$ with $p_l>0$, and $\{N_l\}_{l\geq l_{\text{min}+1}}$ an increasing sequence of integers with $\lim_{l\to \infty} N_l=\infty$ such that for any $\theta \in \Theta$
\begin{align}
\label{eq:s_varphi}
s_{\varphi}(\theta) := \sum_{l\geq l_{\text{min}}+1}\frac{\mathbb{E}\left[\left( \sum_{i=1}^{2N_l} V^i \, \varphi(\bm{Y}_n^{l,i})\right)^2\right]}{p_l} < \infty,
\end{align}
where $\{\bm{Y}_n^{l,i} ,\, V^{l,i}\}_{i=1}^{2N_l}$ is the output of \autoref{alg:ub_level_diff}.
\end{lem}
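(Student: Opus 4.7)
The plan is to turn this into a routine comparison test. The key input is the second-moment bound \eqref{eq:bound_on_pf_level_diff}, which says
$$
\mathbb{E}\!\left[\Big(\sum_{i=1}^{2N_l} V^i\,\varphi(\bm{Y}_n^{l,i})\Big)^2\right] \leq C\!\left(\frac{\Delta_l^{2\wedge \beta}}{N_l}+\Delta_l^{\beta}\right),
$$
with the constant $C$ uniform in $\theta$ under \autoref{ass:Gn_and_Mn}. Substituting $\Delta_l = 2^{-l}$ and dividing by $p_l$, it suffices to exhibit $\{p_l\}$ summing to one and an increasing integer sequence $N_l\to\infty$ such that both series
$$
\sum_{l\geq l_{\min}+1}\frac{2^{-l(2\wedge\beta)}}{p_l N_l}\quad\text{and}\quad \sum_{l\geq l_{\min}+1}\frac{2^{-l\beta}}{p_l}
$$
are finite.

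The concrete construction I would use is a geometric PMF. Pick any $\alpha\in(0,2\wedge\beta)$, for instance $\alpha=(2\wedge\beta)/2$, and set $p_l := c_\alpha\,2^{-\alpha l}$ with $c_\alpha = \big(\sum_{l\geq l_{\min}+1} 2^{-\alpha l}\big)^{-1}$, so that $P_l$ is a valid PMF with $p_l>0$. Then take $N_l$ to be any strictly increasing sequence of integers diverging to $\infty$, e.g.\ $N_l = l$. The second series becomes $c_\alpha^{-1}\sum_l 2^{l(\alpha-\beta)}$, which converges because $\alpha<2\wedge\beta\le\beta$. For the first series, since $\alpha-(2\wedge\beta)<0$, the summand $2^{l(\alpha-(2\wedge\beta))}/N_l$ is already geometrically small, and convergence holds for any positive $N_l$, in particular for the chosen divergent sequence.

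Combining these two estimates yields $s_\varphi(\theta)<\infty$, and since the constant $C$ from \eqref{eq:bound_on_pf_level_diff} is uniform in $\theta$, the resulting bound does not depend on $\theta$ (a bonus beyond the claim). The only mild subtlety worth flagging is that the second-moment bound \eqref{eq:bound_on_pf_level_diff} was stated for $\varphi\in\mathcal{B}_b(\mathbb{R}^{d})\cap\textrm{Lip}_{\|\cdot\|_2}(\mathbb{R}^d)$ via \autoref{thm:main_thm} and \autoref{prop:bound_on_level_diff}, so one needs to invoke the remark following \autoref{thm:main_thm} to extend from a single coordinate to the Lipschitz and bounded $\varphi$ on $\mathbb{R}^{nd}$ that appears in the statement; this is the only non-trivial step, and it is already flagged in the text. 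Once that extension is in place, the rest is a one-line comparison of geometric series.
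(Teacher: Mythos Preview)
Your argument is correct and follows the same skeleton as the paper's proof: invoke the second-moment bound \eqref{eq:bound_on_pf_level_diff}, choose a geometric PMF $p_l\propto 2^{-\gamma l}$ for a suitable exponent $\gamma$, pick an increasing $N_l\to\infty$, and reduce everything to convergence of two geometric series. The only differences are in the specific parameters. The paper takes $p_l=\Delta_l^{\beta\rho}$ with $\rho\in(0,1)$ and $N_l=\Delta_l^{-1}=2^l$, which after substitution gives
\[
s_\varphi(\theta)\;\le\;C\sum_{l\ge l_{\min}+1}\Delta_l^{\beta(1-\rho)}\bigl(1+\Delta_l^{\,2\wedge\beta-\beta+1}\bigr),
\]
and it is precisely the nonnegativity of $2\wedge\beta-\beta+1$ that uses the hypothesis $\beta\le 3$. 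Your choice $\alpha\in(0,2\wedge\beta)$ with any divergent $N_l$ (e.g.\ $N_l=l$) sidesteps this: both series converge for every $\beta>0$, so you in fact prove a slightly stronger statement than required. One small correction: your ``mild subtlety'' about extending from $\mathbb{R}^d$ to $\mathbb{R}^{nd}$ is not needed, since \autoref{thm:main_thm} and \autoref{prop:bound_on_level_diff} are already stated for $\varphi\in\textrm{Lip}_{\|\cdot\|_2}(\mathbb{R}^{nd})\cap\mathcal{B}_b(\mathbb{R}^{nd})$; the remark you cite is about the further extension to $\Theta\times\mathbb{R}^{nd}$, which is not needed for the present lemma.
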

\begin{proof}
Fix $\theta \in \Theta$. Using the inequality \eqref{eq:bound_on_pf_level_diff}, the choice of $p_l=(\Delta_l^{\beta})^\rho$ for some $\rho\in (0,1)$ and $N_l=\Delta_l^{-1}$, we have
\begin{align*}
s_{\varphi}(\theta) &\leq C\sum_{l\geq l_{\text{min}}+1}\frac{1}{p_l} \left( \frac{\Delta_l^{2 \wedge \beta}}{N_l} +  \Delta_l^{\beta} \right) \leq C\sum_{l\geq l_{\text{min}}+1} \Delta_l^{\beta(1-\rho)}\left(1+\Delta_l^{2\wedge\beta-\beta+1}\right) < \infty.
\end{align*}
\end{proof}
\begin{lem}
\label{lem:ub1}
Let $(n,\varphi) \in \mathbb{N} \times \left(\textrm{Lip}_{\|\cdot\|_2}(\Theta\times\mathbb{R}^{nd}) \cap \mathcal{B}_b(\Theta\times\mathbb{R}^{nd})\right)$. Assume \autoref{ass:Gn_and_Mn} holds and the strong error rate $\beta \leq 3$. Given the output of Part 1 and Part 2 of \autoref{alg:ub_infer} and that for each $\theta\in \Theta$, $s_\varphi(\theta) < \infty$, then we have 
\begin{align}
&\mathbb{E}\left[\sum_{i=1}^N  V_k^i~ \varphi(\theta_k,\bm{Y}_{n,k}^{l_{\text{min}},i}) + p_{l_k}^{-1} \sum_{i=1}^{2N} \overline{V}_k^{l_k,i}~ \varphi(\theta_k,\bm{Y}_{n,k}^{l_k,i})~\Big|~\theta_k,(V_k^i,\overline{V}_k^{l_k,i},\bm{Y}_{n,k}^{l_k,i})_{i=1}^N\right]\nonumber\\
&=\gamma_n^{(\theta)}(\varphi).
\end{align}
\end{lem}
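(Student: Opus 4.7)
My plan is to condition on $\theta_k$ (I read the conditioning set in the statement as effectively the MCMC $\sigma$-algebra up to step $k$, since the $\overline{V}_k^{l_k,i}$ and $\bm{Y}_{n,k}^{l_k,i}$ are freshly generated in Part 2 and are what appear inside the expectation), split the left-hand side into the PF contribution at the coarsest level and the randomized correction, and handle each piece separately. For the first piece, the standard unbiasedness of the particle filter (noted just after \autoref{alg:pf}) gives
\[
\mathbb{E}\Big[\textstyle\sum_{i=1}^N V_k^i\,\varphi(\theta_k,\bm{Y}_{n,k}^{l_{\text{min}},i}) \,\big|\, \theta_k\Big]\;=\;\gamma_n^{(\theta_k,l_{\text{min}})}(\varphi).
\]

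For the second piece I would argue in two sub-steps. Conditional on $\theta_k$ and $l_k$, \autoref{alg:ub_level_diff} gives, by the identity displayed immediately after it,
\[
\mathbb{E}\Big[\textstyle\sum_{i=1}^{2N} \overline{V}_k^{l_k,i}\,\varphi(\theta_k,\bm{Y}_{n,k}^{l_k,i}) \,\big|\, \theta_k, l_k\Big]\;=\;\gamma_n^{(\theta_k,l_k)}(\varphi)-\gamma_n^{(\theta_k,l_k-1)}(\varphi).
\]
Multiplying by $p_{l_k}^{-1}$ and averaging over $l_k\sim P_l$ collapses the $p_{l_k}^{-1}$ against the mass function, leaving
\[
\sum_{l\geq l_{\text{min}}+1}\Big[\gamma_n^{(\theta_k,l)}(\varphi)-\gamma_n^{(\theta_k,l-1)}(\varphi)\Big].
\]
This is a telescoping series, whose partial sums up to $L$ equal $\gamma_n^{(\theta_k,L)}(\varphi)-\gamma_n^{(\theta_k,l_{\text{min}})}(\varphi)$. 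By \autoref{prop:bound_on_level_diff} the term at level $l$ is bounded by $C\Delta_l^{\beta/2}$, which is summable (and in particular the tail vanishes), so the series converges to $\gamma_n^{(\theta_k)}(\varphi)-\gamma_n^{(\theta_k,l_{\text{min}})}(\varphi)$. Adding the two pieces yields $\gamma_n^{(\theta_k)}(\varphi)$, as claimed.

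The only subtle point, and the step I expect to be the main obstacle, is the interchange of the infinite sum over $l$ with the conditional expectation in the second piece. Because the summands are signed, I cannot simply invoke Tonelli; instead I would use Fubini after verifying absolute integrability. Writing the randomized correction as $\sum_{l\geq l_{\text{min}}+1}\mathbb{I}\{l_k=l\}\,p_l^{-1}\sum_i \overline{V}_k^{l,i}\varphi(\theta_k,\bm{Y}_{n,k}^{l,i})$, applying the conditional Cauchy--Schwarz inequality together with $\mathbb{E}[\mathbb{I}\{l_k=l\}\mid \theta_k]=p_l$, and invoking the hypothesis $s_\varphi(\theta)<\infty$ from \eqref{eq:s_varphi}, bounds $\sum_l\mathbb{E}\!\left[\big|\mathbb{I}\{l_k=l\}p_l^{-1}\sum_i\overline{V}_k^{l,i}\varphi(\theta_k,\bm{Y}_{n,k}^{l,i})\big|\,\big|\,\theta_k\right]$ by a finite constant depending on $\theta_k$ and $s_\varphi(\theta_k)^{1/2}$. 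With absolute integrability in hand, Fubini legalizes the exchange, and the telescoping identification of the limit with $\gamma_n^{(\theta_k)}(\varphi)$ via \autoref{prop:bound_on_level_diff} completes the argument.
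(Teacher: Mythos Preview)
Your argument is correct and follows the same two-piece decomposition as the paper: handle the coarsest-level PF term by standard particle-filter unbiasedness, handle the randomized correction by the single-term debiasing identity, and add. The only difference is one of packaging. The paper dispatches the correction term in a single line by invoking Theorem~1 of \cite{RG15}, whose hypothesis is precisely the condition $s_\varphi(\theta)<\infty$; you instead reprove that theorem by hand---condition on $l_k$, use the level-difference unbiasedness from \autoref{alg:ub_level_diff}, telescope, and justify the exchange of sum and expectation via Cauchy--Schwarz against $s_\varphi(\theta)$. Your route is more self-contained and makes transparent exactly where the second-moment hypothesis enters, while the paper's is shorter but leans on the external reference. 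You also correctly flag (and resolve) the awkward conditioning in the statement: both you and the paper effectively condition only on $\theta_k$, since conditioning literally on the freshly generated $\overline{V}_k^{l_k,i},\bm{Y}_{n,k}^{l_k,i}$ would render the identity meaningless.
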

\begin{proof}
Since $s_\varphi(\theta)<\infty$ for each $\theta\in \Theta$, Theorem 1 in \cite{RG15} applies and one has 
\begin{align*}
\mathbb{E}\left[p_{l_k}^{-1} \sum_{i=1}^{2N} \overline{V}_k^{l_k,i}~ \varphi(\theta_k,\bm{Y}_{n,k}^{l_k,i})~\Big|~\theta_k,(V_k^i,\overline{V}_k^{l_k,i},\bm{Y}_{n,k}^{l_k,i})_{i=1}^N\right] = \gamma_n^{(\theta)}(\varphi) - \gamma_n^{(\theta,l_{\text{min}})}(\varphi).
\end{align*}
We also know that 
$$
\mathbb{E}\left[\sum_{i=1}^N  V_k^i~ \varphi(\theta_k,\bm{Y}_{n,k}^{l_{\text{min}},i})~\Big|~\theta_k,(V_k^i,\overline{V}_k^{l_k,i},\bm{Y}_{n,k}^{l_k,i})_{i=1}^N \right] = \gamma_n^{(\theta,l_{\text{min}})}(\varphi).
$$
This concludes the proof.
\end{proof}
\begin{theorem}[Consistency]
Let $(n,\varphi) \in \mathbb{N} \times \left(\textrm{Lip}_{\|\cdot\|_2}(\Theta\times\mathbb{R}^{nd}) \cap \mathcal{B}_b(\Theta\times\mathbb{R}^{nd})\right)$. Assume \autoref{ass:Gn_and_Mn} holds and that $\beta \leq 3$. Assume that the Markov chain $\{\theta_k,(V_k^i,\bm{Y}_{n,k}^{l_{\text{min}},i})_{i=1}^N\}_{k=1}^S$ (the output of Part 1 of \autoref{alg:ub_infer}) is $\psi$-irreducible and that for each $\theta \in \Theta$
$$
\int \overline{\Pi}(d\theta) \left( \sqrt{s_1(\theta)}+\sqrt{s_\varphi(\theta)}\right) < \infty.
$$
Then,
$$
\widehat{\Pi_{ub}(\varphi)}=\frac{\sum_{k=1}^S \Xi_k(\varphi)}{\sum_{k=1}^S \Xi_k(1)} \xrightarrow{S\to\infty} \Pi(\varphi), \qquad (a.s.).
$$
\end{theorem}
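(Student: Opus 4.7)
The plan is to write $\widehat{\Pi_{ub}(\varphi)} = \bigl(S^{-1}\sum_{k=1}^S \Xi_k(\varphi)\bigr)/\bigl(S^{-1}\sum_{k=1}^S \Xi_k(1)\bigr)$ and apply an ergodic theorem to the numerator and denominator separately. Since the Part 2 randomness at step $k$ (the level $l_k\sim P_l$ and the associated coupled PF samples) is drawn independently of the past given the current PMMH state, the full process is a Markov chain on an enlarged state space; it inherits $\psi$-irreducibility from the PMMH chain and admits a unique invariant law $\widetilde{\pi}'$. By standard PMMH theory the marginal of $\widetilde{\pi}'$ on $(\theta,\bm{y})$ (the parameter together with the PF ancestral tree) is
\begin{align*}
\widetilde{\pi}(\theta,\bm{y}) = \frac{1}{Z'}\,\overline{\Pi}(\theta)\bigl(\widetilde{Z}(\theta,\bm{y}) + \epsilon\bigr)\psi(\bm{y}\mid\theta),
\end{align*}
where $\widetilde{Z}(\theta,\bm{y}) := \sum_j V^j$ is the PF's unbiased estimator of $\gamma_n^{(\theta,l_{\min})}(1)$, $\psi$ is the PF sampling law, and $Z':= \int\overline{\Pi}(d\theta)(\gamma_n^{(\theta,l_{\min})}(1)+\epsilon)$. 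The Harris ergodic theorem then yields $S^{-1}\sum_{k=1}^S f \to \mathbb{E}_{\widetilde{\pi}'}[f]$ almost surely for every $f\in L^1(\widetilde{\pi}')$.

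The central calculation identifies $\mathbb{E}_{\widetilde{\pi}'}[\Xi_k(\varphi)]$. Conditioning on $(\theta,\bm{y})$ and averaging over the Part 2 randomness (which, by the argument of \autoref{lem:ub1}, has conditional mean $\gamma_n^{(\theta)}(\varphi) - \gamma_n^{(\theta,l_{\min})}(\varphi)$) gives
\begin{align*}
\mathbb{E}_{\widetilde{\pi}'}[\Xi_k(\varphi)] &= \int \widetilde{\pi}(\theta,\bm{y})\,\frac{\sum_i V^i\varphi(\theta,Y^i)+\gamma_n^{(\theta)}(\varphi)-\gamma_n^{(\theta,l_{\min})}(\varphi)}{\widetilde{Z}(\theta,\bm{y})+\epsilon}\,d\theta\,d\bm{y}\\
&= \frac{1}{Z'}\int\overline{\Pi}(d\theta)\int \psi(d\bm{y}\mid\theta)\Bigl[\textstyle\sum_i V^i\varphi(\theta,Y^i) + \gamma_n^{(\theta)}(\varphi) - \gamma_n^{(\theta,l_{\min})}(\varphi)\Bigr]\\
&= \frac{1}{Z'}\int\overline{\Pi}(d\theta)\,\gamma_n^{(\theta)}(\varphi),
\end{align*}
where the factor $\widetilde{Z}+\epsilon$ cancels against the density and standard PF unbiasedness $\int \psi(d\bm{y}\mid\theta)\sum_i V^i\varphi(\theta,Y^i) = \gamma_n^{(\theta,l_{\min})}(\varphi)$ collapses the first and third bracketed terms. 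The identical computation with $\varphi\equiv 1$ gives $\mathbb{E}_{\widetilde{\pi}'}[\Xi_k(1)] = Z'^{-1}\int\overline{\Pi}(d\theta)\gamma_n^{(\theta)}(1)$; the ratio is exactly $\Pi(\varphi)$ by definition of the posterior in \eqref{eq:posterior}, and the denominator limit is bounded below by $Z'^{-1} c^{-n} > 0$ via \autoref{ass:Gn_and_Mn}(i), so the continuous mapping theorem concludes.

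The main obstacle is verifying $\Xi_k(\varphi),\Xi_k(1)\in L^1(\widetilde{\pi}')$ so that the ergodic theorem applies. Since $\widetilde{Z}+\epsilon\geq \epsilon$ one has $|\Xi_k(\varphi)|\leq \epsilon^{-1}|\zeta_k(\varphi)|$, so it suffices to bound $\mathbb{E}_{\widetilde{\pi}'}[|\zeta_k(\varphi)|]$. For the Part 2 correction, conditioning on $\theta$ and combining Jensen with Cauchy--Schwarz in the level index against $\sum_l p_l = 1$ gives $\mathbb{E}[|p_{l_k}^{-1}\sum_i \overline{V}^{l_k,i}\varphi(\theta,Y^{l_k,i})|\mid\theta] \leq \sqrt{s_\varphi(\theta)}$, with an analogous bound involving $\sqrt{s_1(\theta)}$ for the denominator summand. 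The PF term $|\sum_i V^i\varphi|$ is dominated by $\|\varphi\|_\infty\widetilde{Z}$, whose $\widetilde{\pi}$-conditional mean given $\theta$ is uniformly bounded by \autoref{ass:Gn_and_Mn}(i) (which also gives $\gamma_n^{(\theta,l_{\min})}(1)\leq c^n$, so that $\widetilde{\pi}(\theta)\leq Z'^{-1}(c^n+\epsilon)\overline{\Pi}(\theta)$). Hence the integrability hypothesis $\int\overline{\Pi}(d\theta)\bigl(\sqrt{s_1(\theta)}+\sqrt{s_\varphi(\theta)}\bigr)<\infty$ closes the argument, and the proof reduces to carefully justifying the interchange of integral and Part 2 expectation via Fubini, which is permitted by these same bounds.
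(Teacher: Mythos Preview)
Your proof is correct and follows essentially the same route as the paper's: identify the PMMH invariant law $\widetilde{\Pi}$, observe that the $(\sum_i V^i+\epsilon)$ factor in that density cancels against the denominator in $\Xi_k(\varphi)$ so that the limiting mean is proportional to $\int\overline{\Pi}(d\theta)\gamma_n^{(\theta)}(\varphi)$, and verify integrability of $\Xi_k(\varphi)$ via the Cauchy--Schwarz bound $\mathbb{E}\bigl[|p_{l_k}^{-1}\sum_i\overline{V}^{l_k,i}\varphi|\,\big|\,\theta\bigr]\le\sqrt{s_\varphi(\theta)}$. The only structural difference is packaging: the paper invokes \cite[Theorem~1]{VHF20} directly on the base PMMH chain (which is built precisely for this situation of a Harris recurrent chain augmented by conditionally independent auxiliary randomness), whereas you reprove that result inline by passing to the enlarged chain and applying the Harris ergodic theorem. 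Your inline argument is slightly more explicit about the Cauchy--Schwarz step and the lower bound on the denominator, while the paper's citation keeps the proof shorter; the underlying computations are identical.
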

\begin{proof}
Let $\theta_k,(V_k^i,\bm{Y}_{n,k}^{l_{\text{min}},i})_{i=1}^N$, $k=1,\cdots, S$, be the output of Part 1 of \autoref{alg:ub_infer}. Let $f \in \textrm{Lip}_{\|\cdot\|_2}(\Theta\times\mathbb{R}^{nd}) \cap \mathcal{B}_b(\Theta\times\mathbb{R}^{nd})$ and define
\begin{align*}
\mu_f\left(\theta,(V^i,\bm{Y}_n^{l_{\text{min}},i})_{i=1}^N\right) &:= \mathbb{E}\left[\Xi_k(f)\Big|\left(\theta_k,(V_k^i,\bm{Y}_{n,k}^{l_{\text{min}},i})_{i=1}^N\right)=\left(\theta,(V^i,\bm{Y}_n^{l_{\text{min}},i})_{i=1}^N\right)\right]\\
&=\Big(\sum_{i=1}^N V^i+\epsilon\Big)^{-1} ~\gamma_n^{(\theta)}(f) \quad(\text{by \autoref{lem:ub1}}),
\end{align*}
and
\begin{align*}
m^{(1)}_f &\left(\theta,(V^i,\bm{Y}_n^{l_{\text{min}},i})_{i=1}^N\right)\\
 &:= \mathbb{E}\left[|\Xi_k(f)|\Big|\left(\theta_k,(V_k^i,\bm{Y}_{n,k}^{l_{\text{min}},i})_{i=1}^N\right)=\left(\theta,(V^i,\bm{Y}_n^{l_{\text{min}},i})_{i=1}^N\right)\right]\\
&\leq \Big(\sum_{i=1}^N V^i+\epsilon\Big)^{-1}\left[\sum_{i=1}^N V^i\big|f(\theta,\bm{Y}_n^{l_{\text{min}},i})\big|+\sqrt{s_f(\theta)}\right].
\end{align*}
Define the probability
$$
\widetilde{\Pi}\left(d\theta,d(V^i,\bm{Y}_n^{l_{\text{min}},i})_{i=1}^N \right) =\frac{1}{Z} \Big(\sum_{i=1}^N V^i+\epsilon\Big) ~d(V^i)_{i=1}^N ~\overline{\Pi}(d\theta) ~\gamma_n^{(\theta,l_{\text{min}})}(d(\bm{Y}_n^{l_{\text{min}},i})_{i=1}^N),
$$
where $Z$ is the normalization constant, then the chain $\{\theta_k,(V_k^i,\bm{Y}_{n,k}^{l_{\text{min}},i})_{i=1}^N\}_{k=1}^S$, is reversible w.r.t. $\widetilde{\Pi}$. Since the chain is also $\psi$-irreducible, it is Harris recurrent. Moreover,
\begin{align*}
\widetilde{\Pi}(m^{(1)}_\varphi) \leq C \gamma_n^{(\theta)}(\varphi) ~ \int \overline{\Pi}(d\theta) ~\sqrt{s_\varphi(\theta)} <\infty,
\end{align*}
where $C>0$ is some constant and we used the fact that $f$ is bounded, the assumption that $\gamma_n^{(\theta)}$ is finite and $(V_i)_{i=1}^N$ are finite. Similarly, $\widetilde{\Pi}(m^{(1)}_1) < \infty$. Notice also that 
\begin{align*}
\widetilde{\Pi}(\mu_\varphi)=\frac{1}{Z}~\gamma_n^{(\theta,l_{\text{min}})}(\varphi) \int \overline{\Pi}(d\theta) ~\gamma_n^{(\theta)}(d(\bm{Y}_n^{l_{\text{min}},i})_{i=1}^N) = C ~\Pi(\varphi).
\end{align*}
Similarly, $\widetilde{\Pi}(\mu_\varphi)=C$ some finite constant. Thus, the assumptions of \cite[Theorem 1]{VHF20} holds and the result follows.
\end{proof}

\section{Numerical Simulations}
\label{sec:numer}

Consider the model in \ref{eq:levy_driv_sde}-\ref{eq:obs}, where $\{X_t\}_{t\in[0,T_f]}$ is a 1-dimensional L\'{e}vy process characterized by the triplet $(\nu,\Sigma,b)$. We consider a symmetric truncated L\'{e}vy process, that is, $\Sigma=b=0$ which implies that the process considered here has no drift and Brownian motion components. We define the L\'{e}vy measure $\nu$ on $\mathbb{R}\setminus\{0\}$ as
\begin{align*}
\nu(dx) =\left( \mathbb{I}_{[-u,0)}(x) \frac{c}{(-x)^{1+\alpha}} + \mathbb{I}_{(0,u]}(x) \frac{c}{x^{1+\alpha}}  \right) ~dx,
\end{align*}
where $c>0$, $\alpha\in (0,2)$ and $u\geq 1$ is the truncation threshold. This is the same L\'{e}vy measure used in \cite{DH11,AKP19}. Then the measure $\mu^l$, $l\in \mathbb{N}$, is given by
\begin{align*}
\mu^l(dx) =\frac{c}{\lambda_l} \left[\mathbb{I}_{[-u,-\delta_l]}(x) \frac{1}{(-x)^{1+\alpha}} + \mathbb{I}_{[\delta_l,u]}(x) \frac{1}{x^{1+\alpha}}\right]~dx,
\end{align*}
where $\lambda_l$ can be computed analytically as
$$
\lambda_l = \nu(B_{\delta_l}^c) = 2c \int_{\delta_l}^u x^{-1-\alpha}~ dx = \frac{2c}{\alpha}\left(\frac{1}{\delta_l^\alpha} -\frac{1}{u^\alpha} \right).
$$
Recall that $\delta_l$ is chosen such that $\lambda_l = \Delta_l^{-1}$, therefore, we have
$$
\delta_l = \left(\frac{\alpha}{2c\Delta_l} + \frac{1}{u^\alpha}  \right)^{-1/\alpha}.
$$
\begin{figure}[h!]
\centering
\begin{tikzpicture}[scale=1.3]
  
\def\u{1}
\def\a{0.5}
\def\c{0.5}
\def\l{2};
\pgfmathsetmacro{\D}{2^(-\l)}
\pgfmathsetmacro{\L}{1/\D}
\pgfmathsetmacro{\d}{(\a/(2*\c*\D)+(\u)^(-\a))^(-1/\a)}

\begin{axis}[ticks= none, axis x line=middle, axis y line=middle, axis equal]

\addplot[blue, line width=1pt, domain=-\u:-\d]{ (\c/\L) *abs(x)^(-1-\a) };
\addplot[blue, line width=1pt, domain=\d:\u]{ (\c/\L) *abs(x)^(-1-\a)  };
\addplot[dashed, line width=1pt, blue] coordinates{ (-\d, 0)  (-\d, {(\c/\L) *  (\d)^(-3/2)} ) };
\addplot[dashed, line width=1pt, blue] coordinates{ (\d, 0)  (\d, {(\c/\L) *  (\d)^(-3/2)} ) };
\addplot[line width=1pt,blue] coordinates{(-\d,0) (\d,0)};

\addplot[line width=1pt,blue, dashed] coordinates{(-\u,0) (-\u,{(\c/\L) *(\u)^(-1-\a) })};

\addplot[line width=1pt,blue, dashed] coordinates{(\u,0) (\u, {(\c/\L) *(\u)^(-1-\a) })};

\addplot[red, line width=1pt, domain=-\u:-\d]{ ( \c/(\L*\a)) *(abs(x)^(-\a)-(\u)^(-\a) };

\addplot[line width=1pt, red,  domain=-\d:\d] { ( \c/(\L*\a)) *((\d)^(-\a) -(\u)^(-\a)};

\addplot[red, line width=1pt, domain=\d:\u]{ ( \c/(\L*\a)) *(2*(\d)^(-\a) -x^(-\a)-(\u)^(-\a) };


\addplot[black, line width=1pt, domain=0:(\c/(\L*\a)) *((\d)^(-\a) -(\u)^(-\a)]{ -(\L*\a/\c *x + (\u)^(-\a))^(-1/\a) };

\addplot[black, line width=1pt, dashed] coordinates{({(\c/(\L*\a)) *((\d)^(-\a) -(\u)^(-\a)},-\d) ({(\c/(\L*\a)) *((\d)^(-\a) -(\u)^(-\a)}, \d)};

\addplot[black, line width=1pt, domain=(\c/(\L*\a)) *((\d)^(-\a) -(\u)^(-\a):1]{ (2*(\d)^(-\a)-\L*\a/\c *x - (\u)^(-\a))^(-1/\a) };

\node at (axis cs:-\u,-0.15) {\scriptsize $-u$};
\node at (axis cs:-\d-0.1,-0.15) {\scriptsize $-\delta_l$};
\node at (axis cs:\d+0.03,-0.15) {\scriptsize $\delta_l$};
\node at (axis cs:\u,-0.15) {\scriptsize $u$};

\addplot[black, line width=0.5pt] coordinates{(-0.03,1) (0.05,1)};
\node at (axis cs:0.19,1) {\scriptsize 1};

\end{axis}

\end{tikzpicture}
\caption{The blue, red and black curves correspond to the density of $\mu^l$, the CDF and the inverse of the CDF, respectively.}
  \label{fig:mu}
\end{figure}
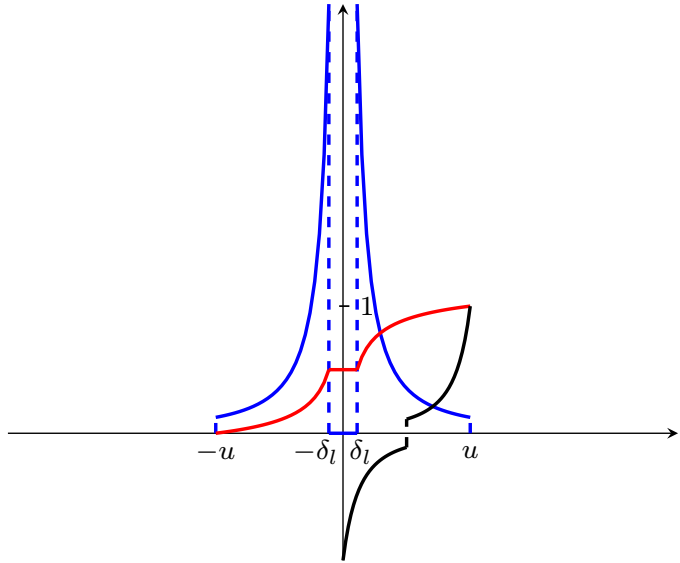

\noindent Notice that $\delta_l < u$, so the definition of $\mu^l$ makes sense. We take $u=1$, $c=0.8$ and $\alpha=0.5$. \autoref{fig:mu} shows the graphs of the density associated with $\mu^l$ for $l=2$, the cumulative distribution function (CDF) and its inverse. Note that due to the symmetric structure of $\nu$, we have $F_0^l = 0$. Since the CDF inverse exists and can be computed analytically, we use the inverse transform sampling method to sample the heights of the L\'{e}vy process jumps from $\mu^l$ in Step 4. of \autoref{alg:disc_levy}. One can compute the strong error rate of the Euler scheme associated with the measure $\nu$. The term $\sigma_{\delta_l}^2$ in \eqref{eq:strong_err_Euler} is computed analytically as
$$
\sigma_{\delta_l}^2 = \int_{B_{\delta_l}} |x|^2 \nu(dx) = 2c \int_0^{\delta_l} x^{1-\alpha} dx = \frac{2c}{2-\alpha} \delta_l^{2-\alpha} = \mathcal{O}(\delta_l^{2-\alpha}),
$$
hence by \eqref{eq:strong_err_Euler},
$$
\mathbb{E} \Big[ \sup_{t\in [0,1]} \left|Y_t - Y_t^l \right|^2 \Big] \leq C \delta_l^{2-\alpha} = C\left(\frac{\alpha}{2c\Delta_l} + 1  \right)^{-(2-\alpha)/\alpha} = \mathcal{O}(\Delta_l^{(2-\alpha)/\alpha}).
$$
Since $\alpha=0.5$, we have the strong error rate $\beta=3$.

The L\'{e}vy-driven SDE considered here has the form
$$
dY_t = \theta ~Y_{t^-} ~ dX_t, \quad Y_0 = y_0,
$$
with $y_0=1$. The observations set is the daily S \& P 500 log-returns from Jan 3, 2012 to May 25, 2013 taken from Yahoo Finance website. In \autoref{fig:mse_cost}, we plot the MSE for \autoref{alg:ub_infer} and PMMH against the computational cost measured in seconds. The MSE is computed by running 52 independent simulations of each method
\[\text{MSE}_{ub} = \frac{1}{52} \sum_{i=1}^{52} [\Pi_{ub}(\varphi)^i - \Pi(\varphi)]^2,\qquad \text{MSE}_{pmmh} = \frac{1}{52} \sum_{i=1}^{52} [\Pi_{pmmh}(\varphi)^i - \Pi(\varphi)]^2,\]
where $\Pi(\varphi)$ is the ground truth. We explain below how the ground truth is estimated. In practice, one has to truncate the values of $l$ in Part 2 of \autoref{alg:ub_infer}, therefore, for each simulation $\Pi_{ub}(\varphi)^i$ in the MSE identity above we set $P_l = (p_l)_{l=l_{\text{min}+1}}^{l_{\text{max}}}$ with $l_{\text{min}} = 1$, $l_{\text{max}} = 12$ and $p_l = 2^{-3l/2}$. The reference $\Pi(\varphi)$ is estimated as the mean of 52 runs of \autoref{alg:ub_infer} with $l_{\text{max}} = 14$, $N=100$ and $S=10^5$. We choose a positive algorithm constant $\epsilon = 10^{-8}$. The number of particles is at most 60 in both algorithms and PMMH has a discretization level of at most 8 (higher levels will result in a much larger computational cost.) We set the function $\varphi(\theta, x) = \theta$ in our simulations. In \autoref{tbl:simul_res},  we compare the expected value of $\theta$, the MSE and ratio of computational costs for both methods. The reference value is 0.76249.

\begin{figure}
\centering
\includegraphics[width =0.7\textwidth]{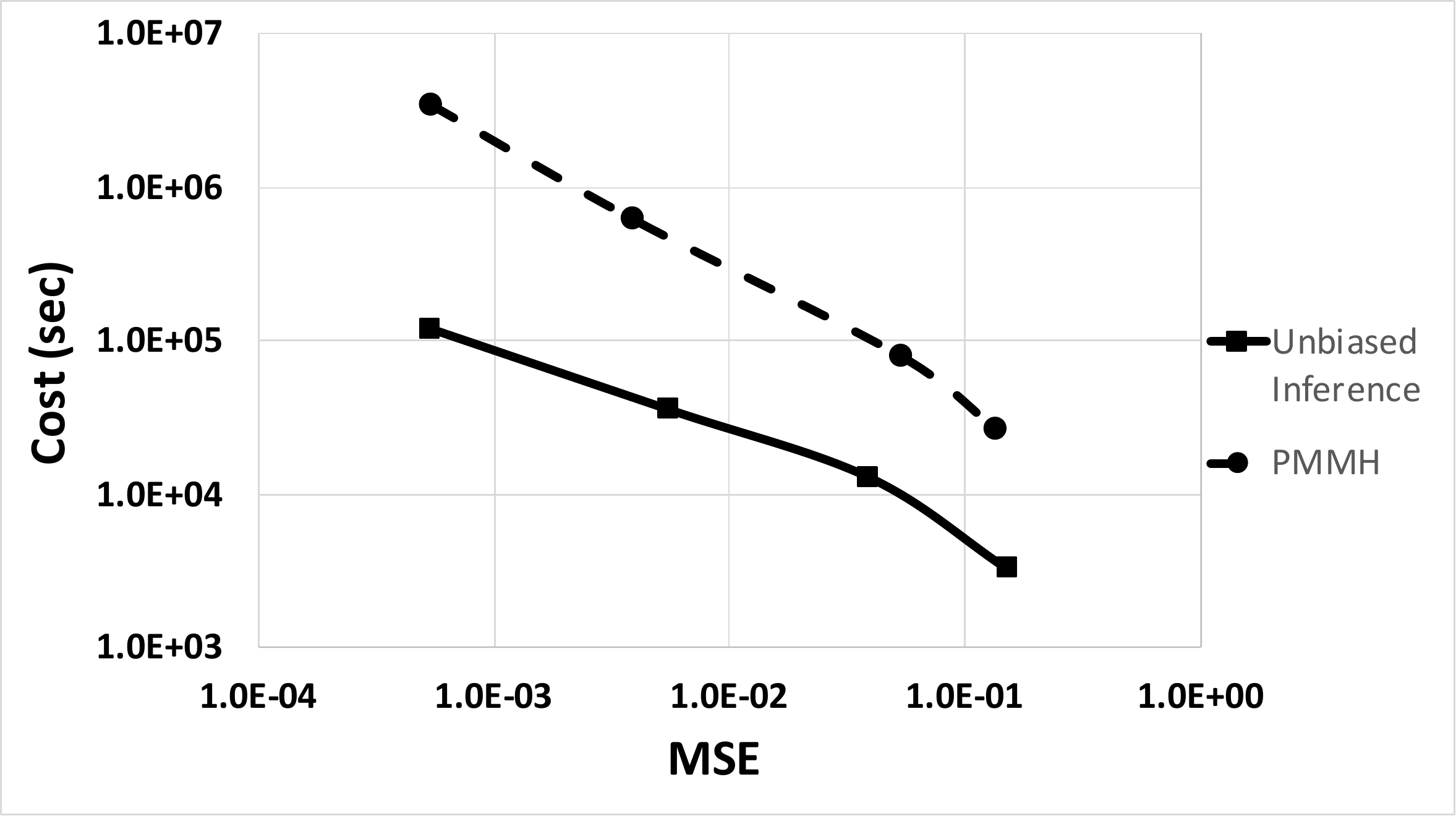}
\caption{The computational cost (in seconds) versus MSE for both the unbiased inference method in \autoref{alg:ub_infer} and PMMH algorithm in \cite{ADH10}.}
\label{fig:mse_cost}
\end{figure}

\begin{table}[h!]
\centering
\begin{tabular}{|l|l|l|l|c|}
\hline
$\theta_{ub}$ & $\text{MSE}_{ub}$     & $\theta_{pmmh}$  & $\text{MSE}_{pmmh}$ & $\text{Cost}_{pmmh}/\text{Cost}_{ub}$\\ \hline
0.78446 & 1.5054 E-01 & 0.78748     & 1.3591 E-01   &  7.98 \\ \hline
0.76129 & 3.9087 E-02 & 0.78309     & 5.4409 E-02 &  6.02 \\ \hline
0.76089 & 5.4825 E-03 & 0.78069    & 3.9315 E-03 &  17.1 \\ \hline
0.76183 & 5.3014 E-04 & 0.77012    & 5.4522 E-04 &  28.5 \\ \hline
\end{tabular}
\caption{First and third columns are the expected values of $\theta$ obtained by \autoref{alg:ub_infer} and PMMH, respectively. The second and fourth columns are the corresponding MSE and the fifth column is the ratio of the computational costs.}
\label{tbl:simul_res}
\end{table}

\end{document}